\documentclass[a4paper]{amsart}

\usepackage{amsmath,amsthm,amssymb}
\usepackage[pdftex]{color,graphicx}
\usepackage[latin1]{inputenc}

\usepackage[colorlinks,linkcolor=blue]{hyperref}
\usepackage[nameinlink]{cleveref}

\linespread{1.5}

\usepackage{bm}
\usepackage{amsfonts}
\usepackage{geometry} 
\usepackage[onehalfspacing]{setspace}
\usepackage{graphicx}
\usepackage{amsthm,amsmath}
\usepackage[dvipsnames]{xcolor}
\usepackage{ulem}
\usepackage{tikz}
\usepackage{enumerate}
\usepackage{amsmath}
\usepackage{mathdots}
\usepackage{yhmath}
\usepackage{cancel}
\usepackage{color}
\usepackage{siunitx}
\usepackage{array}
\usepackage{multirow}
\usepackage{amssymb}
\usepackage{tabularx}
\usepackage{extarrows}
\usepackage{booktabs}
\usepackage{soul}
\usepackage{mathpazo}
\usepackage{hyperref}
\usepackage{multimedia}
\usepackage{rotating}
\usetikzlibrary{fadings}
\usetikzlibrary{patterns}
\usetikzlibrary{shadows.blur}
\usetikzlibrary{shapes}
\usepackage{float}
\usetikzlibrary{positioning}
\graphicspath{{/figs/}}
\usepackage{xcolor}
\usepackage{datetime}
\newdateformat{monthyeardate}{%
  \monthname[\THEMONTH], \THEYEAR}

\newtheorem{theorem}{Theorem}
\newtheorem*{theorem*}{Theorem}

\newtheorem{corollary}{Corollary}

\newtheorem{definition}{Definition}

\newtheorem{lemma}{Lemma}

\newtheorem{remark}{Remark}

\usetikzlibrary{arrows.meta}
\usepackage{thmtools}

\declaretheoremstyle[headfont=\bf]{normalhead}
\declaretheorem[style=normalhead]{example}

\geometry{left=1.25in,right=1.25in,top=2in,bottom=2in}
\usepackage[none]{hyphenat}
\sloppy

\usepackage{tikz-cd}
\usetikzlibrary{decorations.markings}

\tikzset{degil/.style={
            decoration={markings,
            mark= at position 0.5 with {
                  \node[transform shape] (tempnode) {$\bm{\times}$};
                  }
              },
              postaction={decorate}
}
}

\begin{document}

\title[local non-bossiness]{local non-bossiness}\thanks{We are indebted to Keisuke Bando, Haluk Ergin, William Thomson, the Editor Faruk Gul, and two anonymous referees for their suggestions. We thank (in alphabetic order) Agust\'in Bonifacio, Julien Combe, Gabriela Denis, Juan Dubra, Federico Echenique, Lars Ehlers, Onur Kesten, SangMok Lee, Jordi Mass\'o, Ran Shorrer, Camilo J. Sirguiado, Can Urgun and Bumin Yenmez for their valuable comments and suggestions, as well as the conference/seminar participants at the 35th Stony Brook International Conference on Game Theory, LACEA-LAMES 2024, Universidad de Montevideo, London School of Economics, Universidad de Salamanca, Universit\'e de Montr\'eal, Universidad Nacional de San Luis, SEU meeting 2024, and the 14th Conference on Economic Design. Financial support from ANII, FCE\_1\_2023\_1\_176072, is gratefully acknowledged.}

\author[Duque-Rosas]{Eduardo Duque-Rosas}
\thanks{E. Duque-Rosas, London School of Economics and Political Science (e.g.duque@lse.ac.uk)}
\author[Pereyra]{\,Juan S. Pereyra}
\thanks{J. S. Pereyra, Universidad de Montevideo, Uruguay (jspereyra@um.edu.uy)}
\author[Torres-Mart\'{\i}nez]{\,Juan Pablo Torres-Mart\'{\i}nez}
    \thanks{J. P. Torres-Mart\'{\i}nez, Department of Economics, Faculty of Economics and Business, University of Chile (juan.torres@fen.uchile.cl)}
\date{\monthyeardate\today}

\thispagestyle{empty}

\begin{abstract}
The student-optimal stable mechanism (DA), the most popular mechanism in school choice, is the only one that is stable and strategy-proof. However, when DA is implemented, a student can change the schools of others without changing her own. We show that this drawback is limited: a student cannot change her schoolmates while remaining at the same school. We refer to this new property as {\it{local non-bossiness}} and use it to provide a new characterization of DA that does not rely on stability. Furthermore, we show that local non-bossiness plays a crucial role in providing incentives to be truthful when students have preferences over their colleagues. As long as students first consider the school to which they are assigned and then their schoolmates, DA induces the only stable and strategy-proof mechanism. There is limited room to expand this preference domain without compromising the existence of a stable and strategy-proof mechanism.\\
\medskip

\noindent {\sc{Keywords:}} School Choice -  Local Non-bossiness - Student-optimal stable mechanism - Preferences over Colleagues

\noindent {\sc{JEL Classification:}} D47, C78.\\

\end{abstract}

\maketitle

\thispagestyle{empty}
\newpage

\section{Introduction}

In the last two decades, an increasing number of centralized school admission systems have been implemented worldwide.\footnote{The website www.ccas-project.org provides information about school choice systems in different countries.}  In this context, the {\it{student-optimal stable mechanism}} (${\rm{DA}}$) of Gale and Shapley (1962) has become one of the most popular mechanisms for distributing school seats. These developments are rooted in the theoretical results laid out in the seminal works of Roth (1985), Roth and Sotomayor (1989), Balinski and S\"{o}nmez (1999), Abdulkadiro\u{g}lu and S\"{o}nmez (2003), and Pathak and S\"{o}nmez (2013).\footnote{We refer to the surveys of Abdulkadiro\u{g}lu (2013), Kojima (2017), and Pathak (2017) for a discussion of recent developments in school choice and their applications.} 

The success of ${\rm{DA}}$ might be explained by its properties. When students have strict preferences over schools, it is well-known that ${\rm{DA}}$ is the only stable and strategy-proof mechanism (Dubins and Freedman, 1981; Roth, 1982; Alcalde and Barber\`a, 1994). Stability ensures that school seats are not wasted and that no one would prefer the seat of someone with lower priority for it (i.e., there is no ``justified envy''). Strategy-proofness guarantees that no student has incentives to misreport her preferences.

However, when ${\rm{DA}}$ is implemented, a change in a student's preference can modify the assignment of others without changing her own (Roth, 1982).  
In technical terms, ${\rm{DA}}$ is {\it{bossy}}, and this is important for at least two reasons. 
First, it is related to the Pareto inefficiency of ${\rm{DA}}$ and to the fact that a coalition of students could improve their assignments by jointly misrepresenting their preferences (Pap\'ai, 2000; Ergin, 2002). Second, it has significant consequences for incentives when students care not only about the school to which they are assigned but also about the assignment of others. Indeed, it is the bossiness of ${\rm{DA}}$ that compromises the existence of a stable and strategy-proof mechanism when there are externalities but each student prioritizes her own school (Duque and Torres-Mart\'{\i}nez, 2023).

In this paper, we introduce a new incentive property satisfied by ${\rm{DA}}$ that limits its bossiness. The new property, called {\it{local non-bossiness}}, says that a student cannot change her schoolmates without changing the school to which she is assigned.  
Thus, when a locally non-bossy mechanism is used, a student's capacity to modify others' assignment without changing her school is limited to those who are not assigned to the same school. This includes the case where the student is not assigned: she cannot modify the set of unassigned students while remaining without a school. Local non-bossiness is therefore non-trivial even in a one-to-one matching problem.
We use local non-bossiness, along with a set of axioms that do not include stability, to characterize the DA mechanism. 
We also study a model with externalities and show that local non-bossiness ensures the existence of a stable and strategy-proof mechanism when students care first about the school to which they are assigned and then about their schoolmates. 

Our analysis begins by showing that ${\rm{DA}}$ is locally non-bossy (\autoref{loc_nonb}) and that this property is independent of both stability and strategy-proofness. Furthermore, every locally non-bossy and strategy-proof mechanism is {\it{locally group strategy-proof}}, in the sense that no coalition of {\it{schoolmates}} can manipulate it to improve the situation of at least one of its members without harming the others. In particular, ${\rm{DA}}$ is the only stable and locally group strategy-proof mechanism (\autoref{cor1}). We also study the relationships between local non-bossiness, local group strategy-proofness, and other incentive properties (\autoref{diag} summarizes our results).

Secondly, we use local non-bossiness to provide a new characterization of DA that does not involve stability. We fix a set of schools and capacities, and consider a mechanism as a function that assigns a matching to each set of students and their preferences. We show that a mechanism satisfies individual rationality, weak non-wastefulness,
population-monotonicity, strategy-proofness, S-WrARP, and weak local non-bossiness if and only if it is the DA mechanism for some profile of priorities (\autoref{coro-bando}). Under individual rationality, no student is assigned to a school that she considers unacceptable; weak non-wastefulness ensures that unassigned students never prefer a school that did not fill its places; and population-monotonicity guarantees that no one gets worse off when the set of students shrinks. Along with strategy-proofness, these three axioms characterize DA when schools have only one seat available (Ehlers and Klaus, 2016).\footnote{We thank Keisuke Bando for pointing out the relation between local non-bossiness and the characterization of Ehlers and Klaus (2016) for the unit capacity case.} S-WrARP is a restricted version of the {\it{weaker axiom of revealed preference}} introduced by Jamison and Lau (1973), but applied to schools' choice functions induced by a mechanism. It is a necessary condition to 
ensure that these choice functions are consistent with priority orders and finite capacities (Chambers and Yenmez, 2018). Weak local non-bossiness restricts local non-bossiness to schools, allowing a student without a school to modify the set of unassigned students without getting a seat at any school. S-WrARP and weak local non-bossiness make it possible to manage schools with more than one seat available, and they are trivially satisfied when each school has only one seat available. Strengthening strategy-proofness to group strategy-proofness, our axiomatization allows us to characterize DA when schools' priority orders are acyclic in the sense of Ergin (2002), a context where ${\rm{DA}}$ is Pareto efficient (\autoref{coro-bando2}). 

Thirdly, we introduce externalities in our model by allowing students to have preferences over the set of matchings. In particular, we analyze the implications of the local non-bossiness of ${\rm{DA}}$ in this context. It is well known that many of the results in the literature break down in the presence of externalities, for example, the existence of a stable matching. One way to recover this result is to restrict preferences to be \textit{school-lexicographic}. That is, by assuming that each student is primarily concerned with her assigned school, and when assigned to the same school in two different matchings, there is no restriction on how to compare them (Sasaki and Toda, 1996; Dutta and Mass\'o, 1997; Fonseca-Mairena and Triossi, 2023). 

However, even on this restricted domain, a stable and strategy-proof mechanism may not exist (Duque and Torres-Mart\'inez, 2023). 
We further restrict the preference domain to the family of \textit{school-lexicographic preferences over colleagues}, assuming that every student cares first about the school she is assigned to and then only about her schoolmates. That is, each student is indifferent among all the matchings in which she is assigned to the same school with the same colleagues. There are no restrictions on the order in which she ranks two matchings that assign her to the same school but with different schoolmates. In this context, we demonstrate that applying ${\rm{DA}}$ to the school rankings underlying school-lexicographic preferences over colleagues induces a stable and strategy-proof mechanism (\autoref{propo-ex}). Intuitively, a student may want to misreport her preferences either to change her school or to maintain it and change her schoolmates. The first reason for misreporting preferences is already present in classical school choice problems, and avoiding it relates to ensuring strategy-proofness. The second one emerges in the presence of school-lexicographic preferences, and avoiding it relates to guaranteeing local non-bossiness. Therefore, the local non-bossiness and strategy-proofness of ${\rm{DA}}$ are key to ensuring the existence of a stable and strategy-proof mechanism in this context.

Consequently, the incompatibility between stability and strategy-proofness in contexts where students prioritize their own school is not caused by the existence of preferences over the assignment of others {\it{per se}}, but by the fact that these preferences extend beyond their schoolmates. Evidently, it is reasonable to maintain the dependence of preferences on schoolmates because it is a well-documented empirical phenomenon in school choice (Rothstein, 2006; Abdulkadiro\u{g}lu et al., 2020; Allende, 2021; Che et al., 2022; Beuermann et al., 2023; Cox et al., 2023).

Our results have practical implications for admission systems based on the ${\rm{DA}}$ mechanism. They show that ${\rm{DA}}$ still performs well when students care about the assignments of others, as long as they first consider their assigned school and then their schoolmates. 
More precisely, when DA is implemented and students are only required to report strict rankings of schools, it is a weakly dominant strategy for each student with school-lexicographic preferences over colleagues to report the school ranking induced by her true preferences. Furthermore, despite the variety of mechanisms that could be defined using information on preferences for schools and schoolmates, ${\rm{DA}}$ induces the only stable and strategy-proof mechanism on this domain (\autoref{unique}). It is important to note that there is limited room to relax the assumption of school-lexicographic preferences over colleagues. Indeed, it is enough for the preferences of just one student to be school-lexicographic but not school-lexicographic over colleagues to prevent the existence of a stable and strategy-proof mechanism (see \autoref{ext}).\\

\noindent {\bf{Related literature.}} We contribute to three strands of the literature: the analysis of the bossiness of ${\rm{DA}}$, the axiomatization of DA without appealing to stability, and the study of school choice problems with externalities. 

The concept of non-bossiness was first introduced by Satterthwaite and Sonnenschein (1981) and has been studied extensively in the context of the assignment of indivisible goods.\footnote{We refer to the critical survey by Thomson (2016) for a detailed discussion of the multiple interpretations of non-bossiness and its implications.} 
Non-bossiness plays an essential role in avoiding coalitional manipulability, as group strategy-proofness is equivalent to the combination of strategy-proofness and non-bossiness when preferences are strict (Pap\'ai, 2000).  Moreover, for a given pair of schools' priorities and capacities, Ergin (2002) shows that ${\rm{DA}}$ is non-bossy if and only if it is Pareto efficient. Kojima (2010) demonstrates the incompatibility between non-bossiness and stability in college admission problems, a scenario where both sides of the market may act strategically. However, when only one side of the market reports preferences, as in school choice problems, Afacan and Dur (2017) show that the school-optimal stable mechanism is non-bossy for students. We contribute to this strand of the literature by highlighting that the bossiness of ${\rm{DA}}$ is limited: when it is implemented, no student can change her schoolmates without also changing her assigned school. We also show that this property implies that no group of students assigned to the same school can manipulate DA to either improve the assignment of at least one of them without hurting another member (local group strategy-proofness), or maintain the school and change their schoolmates (local group non-bossiness).

Recently, Raghavan (2020) reframes bossiness in terms of an ``influence'' relation: an agent's ability to affect others' welfare without changing her own. He characterizes which agents have influence and whether the influenced agents are better off or worse off in welfare terms. From this perspective, we study a particular set of potentially influenced agents under {\rm{DA}}---namely, the students assigned to the same school as the student who misreports her preferences---and show that they cannot be influenced. 

The DA mechanism has been axiomatized in several ways. It is the only stable mechanism that is weakly Pareto efficient (Roth, 1982), and the only stable and strategy-proof mechanism (Alcalde and Barber\`a, 1994).\footnote{When students always report all schools as admissible, DA is not necessarily the only stable and strategy-proof mechanism (Sirguiado and Torres-Mart\'{\i}nez, 2024).}
Balinski and S\"{o}nmez (1999) and Kojima and Manea (2010) provide alternative axiomatizations without appealing to strategy-proofness, while Kojima and Manea (2010), Morrill (2013), and Ehlers and Klaus (2014, 2016) characterize DA without appealing to stability. In particular, assuming that  each school has only one seat available, Ehlers and Klaus (2016) show that a mechanism satisfies individual rationality, weak non-wastefulness, population-monotonicity, and strategy-proofness if and only if it is DA for some priority profile. However, they show that this characterization does not hold in a general capacity model. By introducing two new axioms, S-WrARP and weak local non-bossiness, we extend Ehlers and Klaus' (2016, Theorem 1) axiomatization of DA to many-to-one school choice problems.

In matching problems with externalities, an agent cares not only about her match but also about the matches of others. This literature was initiated by Sasaki and Toda (1996), who studied stability concepts in marriage markets. In recent years, several authors have extended the analysis to many-to-one matching models and other allocation problems. They have introduced restrictions on preference domains to specify the type of externalities and to ensure the existence of stable outcomes.\footnote{The works of Bando, Kawasaki, and Muto (2016) and Pycia and Yenmez (2023) provide excellent surveys on the evolution of this literature.} In the context of many-to-one matching problems, Dutta and Mass\'o (1997) studied a two-sided model with workers and firms where workers' preferences are lexicographic. When workers' preferences over firms dictate their overall preferences over firm-colleague pairs, they show that the set of stable matchings is non-empty (cf., Fonseca-Mairena and Triossi, 2023). Complementing this approach, Echenique and Yenmez (2007) assume that workers have general preferences over colleagues and present an algorithm that produces a set of allocations containing all stable matchings, if they exist. Moreover, Revilla (2007) and Bykhovskaya (2020) determine subdomains of preferences over colleagues for which a stable matching always exists.  More recently, Pycia and Yenmez (2023) studied a hybrid model that allows for general types of externalities, including school choice problems as a particular case. They demonstrate that a stable matching exists as long as externalities affect students' choice rules in such a way that a {\it{sustitutability condition}} is satisfied, which is also necessary for stability to some extent. 

Although Dutta and Mass\'o (1997) demonstrate that the set of stable matchings when preferences are school-lexicographic is the same as without externalities, Duque and Torres-Mart\'inez (2023) show that a stable and strategy-proof mechanism may not exist. We contribute to this strand of the literature by identifying a new preference domain where this incompatibility does not hold. In particular, when students first consider the school to which they are assigned and then only their colleagues, ${\rm{DA}}$ induces the only stable and strategy-proof mechanism. Moreover, it is the local non-bossiness of ${\rm{DA}}$ that underlies this result.

The remainder of the paper is organized as follows. \autoref{model} describes the classical school choice problem and the concepts of local non-bossiness and local group strategy-proofness. In \autoref{locNB} we show that ${\rm{DA}}$ is locally non-bossy. \autoref{axioma} provides an axiomatic characterization of DA.  In \autoref{ext} we introduce externalities in our model, and we show that ${\rm{DA}}$ induces a stable and strategy-proof mechanism on the domain of school-lexicographic preferences over colleagues. \autoref{conclu} contains some final remarks. Omitted proofs and examples are left to the appendices. 

\section{Model}\label{model}

Let $N$ be a set of students and $S$ a set of schools. Each $s\in S$ has a priority order $\succ_s$ over students, and a capacity $q_s\geq 1$. Every $i\in N$ is characterized by a complete, transitive, and strict preference relation $P_i$ defined on $S\cup \{s_0\}$, where $s_0$ represents an outside option. Denote by $R_i$ the weak preference induced by $P_i$ and refer to a school $s$ as admissible when $sP_i s_0$. Given $\succ=(\succ_s)_{s\in S}$, $q=(q_s)_{s\in S}$, and $P=(P_i)_{i\in N}$, we refer to $[N,S,\succ, q]$ as a \textbf{school choice context} and to $[N,S,\succ, q, P]$ as a \textbf{school choice problem}. 

A matching is a function $\mu:N\rightarrow S\cup\{s_0\}$ that assigns students to schools in such a way that at most $q_s$ students are assigned to $s\in S$. When $\mu(i)=s_0$, student $i$ is not assigned to any school. Let ${\mathcal{M}}$ be the set of matchings and $\mu(s)=\{i\in N: \mu(i)=s\}$ be the set of students that $\mu$ assigns to $s\in S\cup \{s_0\}$. A matching $\mu$ is {\textbf{individually rational}} if no student $i$ prefers $s_0$ to $\mu(i)$, while it is {\bf{non-wasteful}} when there is no $(i,s)\in N\times S$ such that $sP_i \mu(i)$ and $\vert \mu(s)\vert<q_s$. A student $i$ has {\bf{justified envy}} at $\mu$ when there exists $s\in S$ such that $s P_i \mu(i)$ and $i\succ_s j$ for some $j\in \mu(s)$. A matching $\mu$ is {\textbf{stable}} when it is individually rational, non-wasteful, and free from justified envy.\footnote{Equivalently, a matching $\mu$ is stable when it is individually rational and there is no {\bf{blocking pair}} $(i,s)\in N\times S$, in the sense that $sP_i \mu(i)$ and either $\vert \mu(s)\vert <q_s$ or $i\succ_s j$ for some $j\in \mu(s)$.}

A mechanism is a function that associates a matching to each school choice problem. The mechanism designer is assumed to know the school choice context $[N,S,\succ, q]$ and that students' preferences belong to the preference domain ${\mathcal{P}}={\mathcal{L}}^{\vert N\vert}$, where ${\mathcal{L}}$ is the set of strict linear orders defined on $S\cup \{s_0\}$. Given $P=(P_i)_{i\in N}\in {\mathcal{P}}$ and $C\subseteq N$, let $P_C=(P_i)_{i\in C}$ and $P_{-C}=(P_i)_{i\notin C}$.

When a school choice context $[N,S,\succ, q]$ is fixed, we will consider only students' preferences as the argument of a mechanism $\Phi:{\mathcal{P}}\rightarrow {\mathcal{M}}$, that is $\Phi(P)$ instead of $\Phi(N,S,\succ, q, P)$. Denote by $\Phi_i(P_i, P_{-i})$ the assignment of $i$ when she declares preferences $P_i$ and the other students declare $P_{-i}=(P_j)_{j\neq i}$. Similarly, $\Phi_s(P)$ denotes the set of students assigned to $s\in S\cup \{s_0\}$. 

A mechanism $\Phi$ is {\bf{stable}} ({\bf{individually rational}}) if for all $P\in{\mathcal{P}}$ the matching $\Phi(P)$ is stable (individually rational). Consider the following properties:

\begin{itemize}
\item $\Phi$ is {\textbf{strategy-proof}} if there are no $i\in N$, $P\in {\mathcal{P}}$, and $P'_i\in {\mathcal{L}}$ such that $\Phi_i(P'_i, P_{-i}) P_i  \Phi_i(P)$.  
 
\item $\Phi$ is {\textbf{non-bossy}} if for all $i\in N$, $P \in {\mathcal{P}}$, and $P'_i \in {\mathcal{L}}$, $\Phi_i(P) =  \Phi_i(P'_i, P_{-i})$ implies that $\Phi(P) = \Phi(P'_i, P_{-i})$.

\item $\Phi$ is {\textbf{group strategy-proof}} if there are no $P\in {\mathcal{P}}$, $C\subseteq N$, and $P'_C\in {\mathcal{L}}^{\vert C\vert}$ such that:
\begin{itemize}
\item For some $i\in C$, $\Phi_i(P'_{C}, P_{-C}) \, P_i  \,\Phi_i(P)$. 
\item For each $j\in C$, $\Phi_j(P'_{C}, P_{-C}) \,R_j \, \Phi_j(P)$.
\end{itemize}
\end{itemize}

Strategy-proofness requires that no one has incentives to
misreport her preferences, while non-bossiness ensures that a change in the preferences of a student cannot modify the schools of others without changing her own. Under group strategy-proofness no coalition of students can benefit by misrepresenting their preferences.\\

When a {\it{non-bossy}} mechanism is implemented, no one can change the assignment of other students without changing her own.\footnote{We refer to a mechanism as {\it{bossy}} when it is not non-bossy.} If instead of considering the change in the assignment of all students, we focus only on those who are assigned to the same school as the one who changes her preferences, we have a \textit{local version} of non-bossiness.  
\medskip

\begin{definition}\label{loc_nb}
A mechanism $\Phi$ is {\textbf{locally non-bossy}} if for all $i\in N$, $P\in {\mathcal{P}}$, $P'_i\in {\mathcal{L}}$, and $s\in S\cup \{s_0\}$, $\Phi_i(P) =  \Phi_i(P'_i, P_{-i})=s$ implies that $\Phi_s(P) =  \Phi_s(P'_i, P_{-i})$.   \\
\end{definition}

When a mechanism is locally non-bossy, no student can modify her schoolmates without also changing the school to which she is assigned. This definition accommodates the case in which a student remains unassigned after changing her preferences. In this case, the student cannot modify the set of unassigned students without getting a seat in a school. This implies that local non-bossiness is non-trivial even in a one-to-one matching problem.\\

When a mechanism is not group strategy-proof, a coalition of students can benefit by manipulating their preferences. The members of this coalition may originally be assigned to different schools. When we restrict the coalition of students to be originally assigned to the same school, we have a \textit{local version} of group strategy-proofness.\medskip

\begin{definition}
A mechanism $\Phi$ is {\textbf{locally group strategy-proof}} if there are no $s\in S\cup \{s_0\}$, $P\in {\mathcal{P}}$, $C\subseteq \Phi_s(P)$, and $P'_C\in  {\mathcal{L}}^{\vert C\vert}$ such that:
\begin{itemize}
\item For some $i\in C$, $\Phi_i(P'_{C}, P_{-C}) \, P_i  \,\Phi_i(P)$. 
\item For each $j\in C$, $\Phi_j(P'_{C}, P_{-C}) \,R_j \, \Phi_j(P)$. \medskip
\end{itemize}
    \end{definition}

When a mechanism is locally group strategy-proof, no coalition of {\it{schoolmates}} can manipulate it to improve the welfare of at least one of its members. \\

In \autoref{app2} we analyze the relationships between local non-bossiness, local group strategy-proofness, and other incentive properties (\autoref{diag} summarizes our results). 

Specifically, we demonstrate that local non-bossiness is independent of both stability and strategy-proofness, while local group strategy-proofness is weaker than group strategy-proofness and stronger than strategy-proofness. Furthermore, a locally non-bossy and strategy-proof mechanism is locally group strategy-proof (\autoref{papai}), but a locally group strategy-proof mechanism might violate local non-bossiness. This last result contrasts with the well-known equivalence between group strategy-proofness and the combination of non-bossiness and strategy-proofness (Pap\'ai, 2000). Additionally, we  show in \autoref{afacan} 
that a locally non-bossy and strategy-proof mechanism satisfies a local notion of group non-bossiness (Afacan, 2012), and in \autoref{BaIma} that local non-bossiness is independent of weak non-bossiness (Bando and Imamura, 2016).\footnote{A notion related to non-bossiness is {\it{consistency}} (Ergin, 2002). A mechanism is consistent if after removing some students with their seats, the rest of the students remain in the same school when the mechanism is applied to the reduced market. It is easy to formulate a local notion of consistency and to show that a locally consistent mechanism is locally non-bossy.}\medskip\medskip

\section{Local non-bossiness of DA}\label{locNB}\medskip

Given a school choice context $[N,S,\succ, q]$, let ${\rm{DA}}:{\mathcal{P}}\rightarrow {\mathcal{M}}$ be the {\textbf{student-optimal stable mechanism}}, namely the rule that associates to each school choice problem the outcome of the deferred acceptance algorithm when students make proposals.
\medskip 
{\small{
\begin{quote}

\noindent \underline{Student-Proposing Deferred Acceptance Algorithm}
\medskip

\noindent Given $[N,S,(\succ_s,q_s)_{s\in S}, (P_i)_{i\in N}]$, \medskip

\noindent {\it{Step 1:}} Each student $i$ proposes to the most preferred admissible school according to $P_{i}$, if any. Among those who have proposed to it, each school $s$ provisionally accepts the $q_{s}$-highest ranked students according to $\succ_{s}$.\medskip

\noindent {\it{Step $k\geq 2$:}} Each student $i$ who has not been provisionally accepted in step $k-1$ proposes to the most preferred admissible school according to $P_{i}$, among those to which she has not previously proposed, if any. Each school $s$ provisionally accepts the $q_{s}$-highest ranked students according to $\succ_{s}$, among those who have proposed to it in this step or were provisionally accepted by it in step $k-1$.
\medskip

The algorithm terminates at the step in which no proposals are made. Provisional acceptances become definitive then.\\
\end{quote}
}}

For every preference profile $P\in {\mathcal{P}}$, the matching ${\rm{DA}}(P)$ assigns to each student the most preferred alternative in $S\cup \{s_0\}$ that she can reach in a stable matching under $P$ (Gale and Shapley, 1962; Roth, 1985). Moreover, the mechanism ${\rm{DA}}$ is strategy-proof (Dubins and Freedman, 1981; Roth, 1982). 

However, when DA is implemented, a student can modify the assignment of others without affecting her own (Ergin, 2002). Hence, one may wonder if there is a limit to the impact of these changes. In particular, under ${\rm{DA}}$, is it possible for a student to affect the set of students assigned to her school by reporting different preferences, without changing her school ? As the following theorem shows, the answer is negative. \medskip

\begin{theorem}\label{loc_nonb}
The student-optimal stable mechanism is locally non-bossy.\\
\end{theorem}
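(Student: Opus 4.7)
The plan is to reduce the problem to a canonical preference for $i$ and then compare two stable matchings via the Rural Hospitals Theorem together with a careful algorithmic argument. For $s \in S$, let $\tilde{P}_i$ be the preference declaring $s$ as the only acceptable school (for $s = s_0$, take $\tilde{P}_i$ declaring every school unacceptable). By strategy-proofness of ${\rm DA}$ (Dubins and Freedman, 1981), ${\rm DA}_i(\tilde{P}_i, P_{-i}) = s$ whenever ${\rm DA}_i(P_i, P_{-i}) = s$: a student with true preferences $\tilde{P}_i$ could not strictly benefit by misreporting $P_i$, and misreporting $P_i$ would yield her $s$, so her truthful outcome must also be $s$. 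It therefore suffices to show ${\rm DA}_s(P_i, P_{-i}) = {\rm DA}_s(\tilde{P}_i, P_{-i})$; applying the same conclusion to $P'_i$ and equating through the common anchor $\tilde{P}_i$ gives the theorem.

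Let $\mu = {\rm DA}(P_i, P_{-i})$ and $\tilde{\mu} = {\rm DA}(\tilde{P}_i, P_{-i})$. I first observe that $\mu$ remains stable under the weaker profile $(\tilde{P}_i, P_{-i})$: restricting $i$'s acceptable set only shrinks the set of potential blocking pairs, and none of them involve $i$ because $\mu$ is stable under $(P_i, P_{-i})$ with $\mu(i) = s$. Student-optimality of ${\rm DA}$ under $(\tilde{P}_i, P_{-i})$ therefore gives $\tilde{\mu}(j) \, \tilde{R}_j \, \mu(j)$ for every $j \in N$, with $\tilde{\mu}(i) = s$ and $\tilde{\mu}(j) R_j \mu(j)$ for $j \neq i$. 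Since $\mu$ and $\tilde{\mu}$ are both stable under the same profile $(\tilde{P}_i, P_{-i})$, the Rural Hospitals Theorem (Roth, 1986) yields that the two matchings match the same set of students, assign the same number of students to every school, and coincide on every school with unfilled capacity. This already resolves the case $s = s_0$ (same unmatched set) and the case where $s \in S$ is not at capacity in $\mu$.

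When $s \in S$ is at capacity in both matchings, any $j \in \mu(s) \setminus \tilde{\mu}(s)$ satisfies $\tilde{\mu}(j) P_j s$. Stability of $\mu$ under $(P_i, P_{-i})$ then forces $\tilde{\mu}(j)$ to be at capacity in $\mu$ with all its members having strictly higher priority than $j$ at $\tilde{\mu}(j)$; by the cardinality equality from the Rural Hospitals Theorem, one obtains a student $j' \in \mu(\tilde{\mu}(j)) \setminus \tilde{\mu}(\tilde{\mu}(j))$ sharing the same strict-improvement property. Iterating and appealing to the finiteness of $N$ yields a cycle among students in $N \setminus \{i\}$ who rotate upward from $\mu$ to $\tilde{\mu}$. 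The main obstacle is excluding this cycle; I would close it by analyzing the ${\rm DA}$ execution in McVitie--Wilson order with $i$ processed last, arguing that $i$'s extra proposals to schools above $s$ under $P_i$ create only transient displacements that are undone before $i$ eventually settles at $s$, so the state at $s$ at the moment $i$ is absorbed coincides with the one obtained by the direct proposal to $s$ under $\tilde{P}_i$, thereby precluding the cycle and giving $\mu(s) = \tilde{\mu}(s)$.
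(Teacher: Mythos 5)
Your reduction to the anchor preference $\tilde{P}_i$ is sound and is in fact a tidy simplification the paper does not use: strategy-proofness does pin $i$ at $s$ under $\tilde{P}_i$, the matching $\mu={\rm{DA}}(P_i,P_{-i})$ is stable under $(\tilde{P}_i,P_{-i})$, student-optimality of $\tilde{\mu}$ gives the Pareto ranking, and the Rural Hospital Theorem correctly settles the case $s=s_0$ and the case where $s$ is below capacity. The displacement iteration producing a $\mu$-improving cycle in $N\setminus\{i\}$ is also correct. The problem is that everything you have established up to that point uses only that $\mu$ is \emph{some} stable matching of $(\tilde{P}_i,P_{-i})$ assigning $i$ to $s$, and for such matchings the conclusion is false. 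Take $N=\{1,2,3\}$, $S=\{s_1,s_2\}$, $q_{s_1}=2$, $q_{s_2}=1$, $\succ_{s_1}\!:1,2,3$, $\succ_{s_2}\!:3,2,1$, $P_2\!:s_2,s_1,s_0$, $P_3\!:s_1,s_2,s_0$, and $\tilde{P}_1$ with only $s_1$ acceptable: both $((1,s_1),(2,s_2),(3,s_1))$ (student-optimal) and $((1,s_1),(2,s_1),(3,s_2))$ are stable, both keep $1$ at $s_1$, yet $s_1$'s entering class differs. So stability plus the Rural Hospital Theorem cannot exclude the improving cycle; you must invoke that $\mu$ is the \emph{student-optimal} stable matching of $(P_i,P_{-i})$, and that is exactly where the theorem's whole content sits.

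Your final paragraph is where this should happen, but the claim it rests on---that $i$'s proposals to schools above $s$ under $P_i$ ``create only transient displacements that are undone before $i$ settles at $s$''---is false in general: those displacements persist in the final matching, which is precisely why ${\rm{DA}}$ is bossy and why $\mu$ and $\tilde{\mu}$ genuinely differ away from $s$. The correct use of student-optimality is different. Since $\mu$ is student-optimal under $(P_i,P_{-i})$, any $\mu$-improving cycle must be blocked by some student (otherwise executing the cycle yields a stable matching Pareto-dominating $\mu$ for the students involved). The paper's proof then spends most of its effort showing that one can always pass to an improving cycle whose only possible blocker is $i$ herself: it builds a graph on the improving students and those they displace, reroutes every blockable edge to the highest-priority blocker to obtain an unblockable-within-the-graph improving cycle, and uses stability of $\mu'$ to show no $k\neq i$ can block it, whence $i$ lies outside the graph, $s$ receives no improving student, and the Rural Hospital Theorem closes the set equality. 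That blocking-and-rerouting argument is the missing idea; without it, or a genuine substitute, your proof does not go through.
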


To get an intuitive idea of the proof of \autoref{loc_nonb}, let $\mu={\rm{DA}}(P)$ and $\mu'={\rm{DA}}(P'_1,P_{-1})$ be different matchings that give the same assignment to student $1$, namely $\overline{s}\equiv\mu(1)=\mu'(1)$.

First, consider the benchmark case in which $P'_1$ is a {\bf{monotonic transformation}} of $P_1$ for $\overline{s}$, in the sense that the set of schools preferred to $\overline{s}$ shrinks when moving from $P_1$ to $P'_1$. This assumption  implies that $\mu$ is stable under $(P'_1,P_{-1})$, because if there were a blocking pair of $\mu$ under $(P'_1, P_{-1})$, it would be a blocking pair under $P$. Hence, as $\mu'$ is the student-optimal stable matching under $(P'_1,P_{-1})$, $\mu'$ Pareto dominates $\mu$ under $P$.\footnote{This argument proves the {\it{positivity}} of  ${\rm{DA}}$ (cf., Raghavan, 2020, Proposition 2): if a student $i$ reports a monotonic transformation of $P_i$ for ${\rm{DA}}_i(P_i, P_{-i})$ such that ${\rm{DA}}_i(P'_i, P_{-i})={\rm{DA}}_i(P_i, P_{-i})$, then those who modify their assignments are better off.} This property and the stability of $\mu$ guarantee that each student $i$ who changes her assignment from $\mu$ to $\mu'$ displaces someone who was assigned to $\mu'(i)$ at $\mu$. These movements induce \textit{$\mu$-improving cycles} $(i_1, \ldots, i_r)$, where each student $i_k$ strictly prefers $\mu(i_{k+1})$ to $\mu(i_k)$ under $P_k$ [modulo $r$].  Evidently, the implementation of a $\mu$-improving cycle induces justified envy under $P$, because $\mu$ is the student-optimal stable matching. We use this last property to ensure that $\overline{s}$ has the same assignment in $\mu$ and $\mu'$. More precisely, denote by $V$ the set of students who are matched in $\mu$ to schools with different assignments in $\mu$ and $\mu'$. Notice that $V$ includes the students who prefer $\mu'$ to $\mu$. Let $G$ be the graph with nodes $V$ in which every $i$ that prefers $\mu'$ to $\mu$ points to all students who are assigned in $\mu$ to $\mu'(i)$. By allowing students in $V$ to block the changes from $\mu$ to $\mu'$ that induce justified envy, in the proof of \autoref{loc_nonb} we perform an edge replacement in $G$ in order to construct a $\mu$-improving cycle $(i^*_1, \ldots, i^*_r)$ in which justified envy is confined to students outside $V$.\footnote{Specifically, if the change of some student $i$ from $\mu(i)$ to $\mu'(i)$ induces justified envy, we replace every edge from $i$ to a student $j$ such that $\mu(j)=\mu'(i)$ by an edge from $k$ to $j$, where $k$ is the student with justified envy that has the highest priority at $\mu'(i)$. In the proof of Theorem 1, we show that the {\it{multigraph}} obtained by these procedure has a $\mu$-improving cycle that no student in $V$ blocks. }
Since ${\rm{DA}}$ is stable and students in $N \setminus \{1\}$ have the same preferences in $P$ and $(P'_1, P_{-1})$, no student in $(N\setminus V)\setminus \{1\}$ blocks $(i^*_1, \ldots, i^*_r)$. Therefore, student $1$ must block $(i^*_1, \ldots, i^*_r)$, which implies that she does not belong to $V$. We conclude that $\mu(\overline{s})=\mu'(\overline{s})$.

For an arbitrary $P_1'$, let $P^*_1$ be a preference such that $\bar s$ is the best alternative. Given that $1$ is assigned to $\bar s$ in $\mu$ and $\mu'$, we have that $\mu(1)={\rm{DA}}_1(P^*_1,P_{-1})$ and $\mu'(1)={\rm{DA}}_1(P^*_1,P_{-1})$. 
Then, by the first argument, $\mu(\overline{s})={\rm{DA}}_{\overline{s}}(P^*_1,P_{-1})$ and $\mu'(\overline{s})={\rm{DA}}_{\overline{s}}(P^*_1,P_{-1})$, which implies that $\mu(\overline{s})=\mu'(\overline{s})$.\medskip

\begin{proof}[Proof of \autoref{loc_nonb}]

Fix a school choice context  $[S,N,\succ, q]$ and without loss of generality consider student $1$ and preference profiles  $(P_1,P_{-1}), (P'_1, P_{-1})\in {\mathcal{P}}$ such that $\mu\equiv {\rm{DA}}(P_1,P_{-1})$ and $\mu'\equiv {\rm{DA}}(P'_1,P_{-1})$ differ, but $\overline{s}\equiv \mu(1)$ is equal to $\mu'(1)$. We will prove that $\mu(\overline{s})=\mu'(\overline{s})$. \\

\noindent  {\it{{\bf{Case I.}} The preference relation $P'_1$ is a {\it{monotonic transformation}} of $P_1$ for $\overline{s}$.}} 

In this scenario, as $\{s\in S: s P'_1 \overline{s}\}\subseteq \{s\in S: s P_1 \overline{s}\}$, $\mu$ is stable under $(P'_1, P_{-1})$. Since $\mu'$ is the student-optimal stable matching under $(P'_1, P_{-1})$, $\mu'(i) P_i \mu(i)$ for each $i\in N$ such that $\mu(i)\neq\mu'(i)$. Let $I$ be the non-empty set of students who prefer $\mu'$ to $\mu$. The stability of $\mu$ under $(P_1,P_{-1})$ ensures that each student in $I$ is assigned in $\mu'$ to a school that fills its quota at $\mu$. Hence, when we move from $\mu$ to $\mu'$, each student in $I$ displaces another student.

Let $G=(V,E)$ be the graph with nodes and directed edges$$V= \bigcup_{\substack{s\in S\cup \{s_0\}: \ \mu(s) \neq \mu'(s)}} \mu(s),\quad\quad\quad\quad E=\{[i,j]: i\in I,\, \mu(j)=\mu'(i)\}.$$

Notice that $V$ is non-empty, as $I\subseteq V$. The nodes of $G$ are the students who are assigned in $\mu$ to schools that have different assignments in $\mu$ and $\mu'$. Also, there is an edge between every $i\in I$ and all the students who are assigned to $\mu'(i)$ in $\mu$. 

Consider the following concepts:
\begin{itemize}
\item[-] A {\it{$\mu$-improving cycle}}\footnote{We will also refer to a $\mu$-improving cycle as an {\it{improvement cycle from $\mu$}}.}  is a tuple of different students $(i_1,\ldots, i_{r})$ such that $$\quad \quad \mu(i_{l+1}) P_{i_l} \mu(i_l),\quad \,\,\forall l\in \{1,\ldots, r\} \, \,{\mbox{[modulo $r$]}}. $$
 \item[-] A student {\it{$k$ $\mu$-blocks $[i,j]$}} when $\mu(j) P_k \mu(k)$ and $k \succ_{\mu(j)} i$. 
 
 \item[-] A student {\it{$k$ blocks a $\mu$-improving cycle}} $(i_1,\ldots, i_{r})$ when she $\mu$-blocks $[i_l,i_{l+1}]$ for some $l\in \{1,\ldots, r\}$ [modulo $r$].
\end{itemize}

Let $G'=(V,E')$ be the {\it{multigraph}} obtained from $G$ through the following edge-replacement procedure: substitute each edge $[i,j]\in E$ for which
$$I[i,j]\equiv\{k\in V: \,k \ \ \mbox{$\mu$-blocks} \,\, [i,j]\}\neq \emptyset$$
by the directed edge $[\overline{k},j]$, where $\overline{k}$ is the student with the highest priority at school $\mu(j)$ among those in $I[i,j]$. Notice that, $I[i,j]=I[i,j']$ for all $[i,j],[i,j']\in E$. Moreover, when $[\overline{k},j]\notin E$, its inclusion in $E'$ is feasible because $\overline{k}\in I[i,j]\subseteq  V$. 

We make the following remarks regarding $G'$:

\begin{enumerate}
 \item {\it{The graph $G'$ has a cycle. That is, there exists a tuple of different students  $(i^*_1,\ldots, i^*_r)$ such that $[i^*_l, i^*_{l+1}]\in E'$ for all $l\in \{1,\ldots, r\} \, \,{\mbox{[modulo $r$]}}$.}}

Remember that, when we move from $\mu$ to $\mu'$, every student $i$ such that $\mu(i)\neq\mu'(i)$ displaces someone who was assigned to $\mu'(i)$ at $\mu$. Thus, by construction, all nodes of $G$ have a positive in-degree. The same property holds in $G'$. Indeed, when $E'$ is constructed from $E$, if some directed edge $[i,j]$ is deleted, then an edge that points to $j$ is included. Therefore, starting from a node in $V$ and moving backward through the edges in $E'$ we can find a cycle of $G'$.\medskip

 \item {\it{Every cycle in $G'$ is a $\mu$-improving cycle.}}

Remember that, if a student $k$ $\mu$-blocks $[i,j]\in E$, then $\mu(j) P_k \mu(k)$. Hence, each $[k,j]\in  E'\setminus E$ satisfies $\mu(j) P_{k} \mu(k)$. Since the definition of $E$ ensures that every $[k,j]\in  E'\cap E$ satisfies $\mu(j) P_{k} \mu(k)$, we conclude that each cycle in $G'$ is a $\mu$-improving cycle. \medskip

 \item {\it{A $\mu$-improving cycle in $G'$ cannot be blocked by a student in $V$.}}
 
It is a direct consequence of the construction of the set of edges $E'$.  \medskip
 
 \item {\it{Every $\mu$-improving cycle  $(i^*_1,\ldots, i^*_r)$ in $G'$ is blocked by student $1$.}}
 
Since  $\mu$ is the student-optimal stable matching under $(P_1,P_{-1})$, it follows from the previous remark that 
  $(i^*_1,\ldots, i^*_r)$ must be blocked by some student in $N\setminus V$. Moreover, the stability of {$\mu'$} under $(P'_1, P_{-1})$ guarantees that only students in $I\cup\{1\}$ can block $(i^*_1,\ldots, i^*_r)$. Indeed, students who are not in $I\cup\{1\}$ have the same preferences in $P$ and $(P'_1,P_{-1})$, and stay in the same school under $\mu$ and $\mu'$. If such a student blocks the $\mu$-improving cycle, she will also block $\mu'$ because she will have more priority than one of the incoming students at some school.\footnote{Assume that $h\neq 1$ blocks $(i^*_1,\ldots, i^*_r)$. Then, $h$ $\mu$-blocks some $[k,j]\equiv [i^*_l, i^*_{l+1}]$, where  $l\in \{1,\ldots, r\}$ [modulo $r$]. If $s\equiv \mu(j)$, it follows that $s P_h \mu(h)$ and $h\succ_ s k$. We have two possible scenarios: (i) if $[k,j]\in E$, then $\mu'(k)=s$; (ii) if $[k,j]\in E'\setminus E$, then there exists $i\in I$ such that $k$ $\mu$-blocks $[i,j]\in E$, which ensures that $k\succ_ s i$ and $\mu'(i)=s$. Hence, the transitivity of $\succ_s$ guarantees that $h$ has more priority at $s$ than some student assigned to it at $\mu'$. Therefore, as we assume that $s P_h \mu(h)$, if $\mu'(h)=\mu(h)$, we would conclude that $(h,s)$ blocks $\mu'$.}  As $I\subseteq V$, we conclude that $(i^*_1,\ldots, i^*_r)$ is blocked by student $1$.
\end{enumerate}

It follows from remarks (3)-(4) that $1\notin V$, which implies that $\mu(\overline{s})= \mu'(\overline{s})$.\footnote{The construction of $G'$ is not superfluous, because the $\mu$-improving cycle $(i^*_1,\ldots, i^*_r)$, that allows us to show that $\mu(\overline{s})=\mu'(\overline{s}),$ may not be present in the graph $G$ (see \autoref{ex-cycles}).} 

$\,$

\noindent {\it{{\bf{Case II.}} The preference relation $P'_1$ is arbitrary.}}

Let $P^*_1\in {\mathcal{L}}$ be such that $\bar s$ is the best alternative (it is possible that $P^*_1=P_1$). Given that $\mu$ is stable under $(P^*_1,P_{-1})$, and ${\rm{DA}}(P^*_1, P_{-1})$ is the student optimal stable matching under $(P^*_1, P_{-1})$, we have that ${\rm{DA}}_1(P^*_1, P_{-1})=\overline{s}$. Then, $\mu(1) = \mu'(1)={\rm{DA}}_1(P^*_1, P_{-1})=\overline{s}$. It follows from Case I that $\mu(\bar s)={\rm{DA}}_{\overline{s}}(P^*_1,P_{-1})$, and that $\mu'(\bar s)={\rm{DA}}_{\overline{s}}(P^*_1,P_{-1})$. Therefore $\mu(\bar s)=\mu'(\bar s)$.
\end{proof}

\bigskip

The proof of \autoref{loc_nonb} uses the concept of an {\bf{improvement cycle}}: an ordered list of students who exchange schools in a way that makes all of them better off. This concept was previously studied by Kesten (2010) to analyze efficiency adjustments to DA. However, the improvement cycle constructed in the proof of \autoref{loc_nonb}---the one that is blocked only by the student who misreports her preferences---does not necessarily coincide with one of the improvement cycles implemented by the algorithm proposed by Kesten (2010) for mitigating inefficiencies (see \autoref{ex-cycles} and the subsequent discussion). This is not an issue, since our goal is not to address DA's inefficiencies. \medskip

As \autoref{loc_nonb} demonstrates, when ${\rm{DA}}$ is implemented, no student can modify her preferences to change her schoolmates without changing her school. But what happens when multiple students, {\it{all assigned to the same school}}, change their preferences? Can they change their colleagues without changing their school? The answer is negative, and not only for ${\rm{DA}}$, but also for every locally non-bossy and strategy-proof mechanism (see \autoref{afacan}).\medskip

\begin{remark}\label{r1} 
{\textnormal{Given $[N,S,\succ,q]$, when DA is implemented, a student who misreports her preferences and changes her assignment, also modifies all of her colleagues. 
More formally, if ${\rm{Coll}}_i(P)=\{j\in N:  j\neq i, \,\,{\rm{DA}}_j(P)={\rm{DA}}_i(P)\}$ is the set of colleagues of a student $i$ in ${\rm{DA}}(P)$, then the following property holds (see \autoref{DA3}): 
\begin{itemize}
\item For all $i\in N$, $P\in {\mathcal{P}}$, and $P'_i\in {\mathcal{L}}$, 
$${\rm{DA}}_i(P)\neq {\rm{DA}}_i(P'_i, P_{-i})\quad\Longrightarrow \quad {\rm{Coll}}_i(P) \cap {\rm{Coll}}_i(P'_i, P_{-i})=\emptyset.$$ 
\end{itemize}}}

{\textnormal{This property implies that ${\rm{DA}}$ satisfies the converse of local non-bossiness: if the colleagues of some student do not change when her preferences are modified, then her school does not change either. \hfill $\Box$}} \medskip
\end{remark}

\begin{remark}\label{ext} {\textnormal{Throughout the paper, we implicitly assume that schools have {\it{responsive preferences}} (Roth and Sotomayor, 1990), in the sense that priority orders $(\succ_s)_{s\in S}$ and capacities $(q_s)_{s\in S}$ determine the relevant information about how schools choose students from a pool of applicants. This assumption is essential to ensure our main result. Indeed, in \autoref{extension} we provide examples illustrating that the local non-bossiness of ${\rm{DA}}$  cannot be guaranteed when schools' preferences are non-responsive, even in contexts in which ${\rm{DA}}$ is well-defined, stable, strategy-proof, and optimal for students  (cf., Roth, 1984b; Hatfield and Milgrom, 2005).  \hfill $\Box$ }} \medskip
\end{remark}

The ${\rm{DA}}$ mechanism is not the only one that is locally non-bossy and stable in ${\mathcal{P}}$. Indeed, the {\it{school-optimal}} stable mechanism also satisfies these properties, because it is non-bossy (Afacan and Dur, 2017). Furthermore, the arguments made in the proof of Theorem 1 can be easily adapted to provide a direct proof that the school-optimal stable mechanism is locally non-bossy. Indeed, since this mechanism assigns each student to the worst school she can reach in a stable matching, it is sufficient to repeat the proof focusing on {\it{$\mu$-worsening cycles}}: tuples $(i_1,\ldots, i_r)$ of different students such that $\mu(i_l) P_{i_l}\mu(i_{l+i})$ for all $l\in \{1,\ldots, r\}$ [modulo $r$].

Not all stable mechanisms are locally non-bossy. For instance, the {\it{school-median}} stable mechanism (Klaus and Klijn, 2006) does not satisfy this property (see \autoref{median}).\\

It is well-known that ${\rm{DA}}$ is the only mechanism defined in ${\mathcal{P}}$ that is both stable and strategy-proof (Alcalde and Barber\`a, 1994). Since local group strategy-proofness is stronger than strategy-proofness, and local non-bossiness and strategy-proofness ensure local
group strategy-proofness (\autoref{papai}), the following result is a direct consequence of \autoref{loc_nonb}.\medskip

\begin{corollary}\label{cor1}
${\rm{DA}}$ is the only mechanism that is stable and locally group strategy-proof.\\
\end{corollary}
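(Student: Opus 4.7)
The plan is to combine three ingredients already at our disposal: (a) the characterization of ${\rm{DA}}$ as the unique stable and strategy-proof mechanism due to Alcalde and Barber\`a (1994); (b) \autoref{loc_nonb}, which says ${\rm{DA}}$ is locally non-bossy; and (c) the implication (\autoref{papai}) that any locally non-bossy and strategy-proof mechanism is locally group strategy-proof. The corollary then falls out as a short two-part argument: existence of a stable and locally group strategy-proof mechanism, followed by its uniqueness.

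First I would establish that ${\rm{DA}}$ itself satisfies both properties. Stability of ${\rm{DA}}$ is classical (Gale and Shapley, 1962), and its strategy-proofness is due to Dubins and Freedman (1981) and Roth (1982). By \autoref{loc_nonb}, ${\rm{DA}}$ is locally non-bossy, and then \autoref{papai} upgrades the conjunction of strategy-proofness and local non-bossiness to local group strategy-proofness. Thus ${\rm{DA}}$ is a stable and locally group strategy-proof mechanism on ${\mathcal{P}}$.

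Next, for uniqueness, I would take an arbitrary mechanism $\Phi$ on ${\mathcal{P}}$ that is stable and locally group strategy-proof, and show $\Phi={\rm{DA}}$. Local group strategy-proofness, applied to the singleton coalition $C=\{i\}$ (which is trivially contained in $\Phi_{\Phi_i(P)}(P)$), immediately yields strategy-proofness of $\Phi$. Hence $\Phi$ is both stable and strategy-proof. By Alcalde and Barber\`a (1994), the only such mechanism on ${\mathcal{P}}$ is ${\rm{DA}}$, so $\Phi={\rm{DA}}$, completing the proof.

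There is no real obstacle here since the corollary is flagged in the text as a direct consequence of the preceding results; the only step to be careful about is to invoke local group strategy-proofness with singleton coalitions to derive plain strategy-proofness, which legitimizes the appeal to the Alcalde--Barber\`a characterization. No new computation is required.
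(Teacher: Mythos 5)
Your proposal is correct and follows essentially the same route as the paper, which also derives the corollary by combining \autoref{loc_nonb} and \autoref{papai} for existence with the observation that local group strategy-proofness implies strategy-proofness, so that the Alcalde--Barber\`a characterization gives uniqueness. The singleton-coalition argument you spell out is exactly the sense in which the paper asserts that local group strategy-proofness is ``stronger than strategy-proofness.''
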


\section{A characterization of DA via local non-bossiness}\label{axioma}

In this section, we provide a characterization of DA using axioms that do not involve stability. Our result extends the characterization of DA given by Ehlers and Klaus (2016, Theorem 1) from one-to-one matching problems to many-to-one contexts. 

To maintain Ehlers and Klaus' (2016) context, we analyze a variable-population school choice model. Hence, given a population $\overline{N}$ of students and a set $S$ of schools with capacities $q=(q_s)_{s\in S}$, we consider mechanisms that depend only on the set $N\subseteq\overline{N}$ of students in the market and their preferences $(P_i)_{i\in N}$. To simplify notation, we consider ${\mathcal{P}}={\mathcal{L}}^{\overline{N}}$ as the preference domain, although only the preferences of the agents in the market will be relevant at the moment of determining a matching. Let ${\mathcal{N}}$ be the collection of non-empty subsets of $\overline{N}$. 

Given $N\in {\mathcal{N}}$, let $\mathcal{M}(N)$ be the set of matchings compatible with $[N,S,q]$. That is, $\mathcal{M}(N)$ is the family of functions $\mu:N\rightarrow S\cup\{s_0\}$ such that $\vert \{i\in N: \mu(i)=s\}\vert \leq q_s$ for all $s\in S$. Let $\mathcal{M}^*=\bigcup_{N\in {\mathcal{N}}} \mathcal{M}(N)$. For each $i\in \overline{N}$ and $P_i\in {\mathcal{L}}$, let $$A(P_i)=\{s\in S: sP_i s_0\}$$ be the set of schools that are admissible under $P_i$. Moreover, denote by $P^t_i$ the strict linear order defined on $A(P_i)$ such that, for every $s,s'\in A(P_i)$, it holds that $s P^t s'$ if and only if $s P_i s'$. In other words, $P^t_i$ is the truncation of $P_i$ at $s_0$. 

A {\bf{mechanism}} is a function $\Phi:{\mathcal{N}}\times \mathcal{P}\rightarrow \mathcal{M}^*$ that associates to each $(N, P)\in {\mathcal{N}}\times \mathcal{P}$ a matching in ${\mathcal{M}}(N)$ that depends only on $[N,S,q,(P^t_i)_{i\in N}]$. In other words, the matching $\Phi(N,P)$ can only be affected by the set of schools and their capacities, by the set of students in the market, and by the way in which these students rank their admissible schools. This last assumption allows us to avoid making $\Phi(N,P)$ depend on the alternatives that students in $N$ consider inadmissible, or on the preferences of individuals in $\overline{N}\setminus N$.\medskip
 
\newpage

Consider the following properties:
\begin{itemize}
\item $\Phi$ is {\textbf{weakly non-wasteful}} when for all $N\in {\mathcal{N}}$, $i\in N$, $P\in{\mathcal{P}}$, and $s\in S$, if $s P_i \Phi_i(N,P)$ and $\Phi_i(N,P)=s_0$, then $\vert\Phi_s(N,P)\vert=q_s$.
\item $\Phi$ is {\textbf{population-monotonic}} if for all $N,N' \in {\mathcal{N}}$ such that $N\subseteq N'$, $i\in N$, and $P\in   {\mathcal{P}}$  we have that $\Phi_i(N,P) R_i \Phi_i(N',P)$.
\end{itemize}

Under weak non-wastefulness, no unassigned student prefers a school that did not fill its places. Population-monotonicity guarantees that when the set of students enlarges, those who were initially present are (weakly) worse off.\\

By definition, for every mechanism $\Phi:{\mathcal{N}}\times{\mathcal{P}}\rightarrow {\mathcal{M}}^*$ and $(N,P)\in {\mathcal{N}}\times{\mathcal{P}}$, the matching $\Phi(N,P)$ depends only on $[N,S,q, (P^t_i)_{i\in N}]$. Hence, given a preference profile $P\in {\mathcal{P}}$ such that $A(P_i)=\{s\}$ for all $i\in {\overline{N}}$, the mapping that associates to each $N\subseteq \overline{N}$ the set $\Phi_s(N,P)$ can be viewed as a {\it{choice function}} for school $s$, in the sense that it determines the students assigned to $s$ for every subset of candidates in $\overline{N}$. \medskip

\begin{definition}
A mechanism  $\Phi$ satisfies the {\textbf{S-weaker axiom of revealed preference}} ({\textbf{S-WrARP}}) when for every $i,j\in \overline{N}$, $s\in S$, and $N,N'\in {\mathcal{N}}$ such that $i,j\in N\cap N'$ and
$ \vert N\vert=\vert N' \vert=q_s+1$, the following property holds when $P\in {\mathcal{P}}$ satisfies $A(P_k)=\{s\}$ for all $k\in \overline{N}$, 
$$\quad \Big[\, i\in \Phi_s(N,P),\,\,\quad j\in \Phi_s(N',P)\setminus \Phi_s(N,P)\,\Big]\,\Longrightarrow \, \big[\, i \in \Phi_s(N',P) \,\big].\quad $$
\end{definition}

$\,$

S-WrARP is a restricted version of the {\it{weaker axiom of revealed preference}} introduced by Jamison and Lau (1973), but applied to schools' choice functions induced by a mechanism. More precisely, assuming that some school $s$ is the only admissible alternative, consider a situation in which two students, $i$ and $j$, can be assigned to $s$ from two different sets of size $q_s+1$, namely $N$ and $N'$. The property S-WrARP ensures that, when the mechanism $\Phi$ is implemented, if $i$, but not $j$, is assigned to $s$ when the set $N$ is considered, then whenever $j$ is assigned to $s$ when $N'$ is considered, $i$ is also assigned to $s$. This axiom will allow us to appeal to the results of Chambers and Yenmez (2018) to construct schools' priority orders that are consistent with the assignments determined by $\Phi$. \\

\begin{definition}
A mechanism  $\Phi$ is {\textbf{weakly locally non-bossy}} if for all $N\in {\mathcal{N}}$, $i\in N$, $P\in{\mathcal{P}}$, $P'_i\in {\mathcal{L}}$, and $s\in S$, we have that $\Phi_i(N,P) =  \Phi_i(N,(P'_i, P_{-i}))=s$ implies that $\Phi_s(N,P) =  \Phi_s(N,(P'_i, P_{-i}))$.\medskip
\end{definition}

Weak local non-bossiness restricts local non-bossiness to the set of (real) schools $S$. \\

Given a priority profile $\succ$, let ${\rm{DA}}^\succ$ be the mechanism that associates to each $(N,P)\in{\mathcal{N}}\times \mathcal{P}$ the student-optimal stable matching of $[N,S,\succ^N,q,(P_i)_{i\in N}]$, where $\succ^N$ is the priority profile induced by $\succ$.\footnote{That is, if $\succ=(\succ_s)_{s\in S}$,  the induced priority profile $\succ^N=(\succ^N_s)_{s\in S}$ is such that, for all $i,j\in N$ and $s\in S$, $i \succ^N_s j$ whenever $i \succ_s j$.} 
When each school has only one available seat, Ehlers and Klaus (2016) demonstrate that a mechanism $\Phi$ satisfies individual rationality, weak non-wastefulness, population-monotonicity, and strategy-proofness if and only if $\Phi={\rm{DA}}^{\succ}$ for some priority profile $\succ$. Furthermore, they show that this characterization does not hold in the general case in which schools have non-unit capacities.

The following result shows that the four axioms of Ehlers and Klaus (2016) jointly with S-WrARP and weak local non-bossiness fully characterize the family of mechanisms ${\rm{DA}}^\succ$, where $\succ$ is an arbitrary priority profile. Note that the two new axioms are trivially satisfied when each school has only one available seat. \medskip

\begin{theorem}\label{coro-bando}
A mechanism $\Phi:{\mathcal{N}}\times \mathcal{P}\rightarrow \mathcal{M}^*$ is individually rational, weakly non-wasteful,  population-monotonic, strategy-proof, weakly locally non-bossy, and satisfies S-WrARP if and only if $\Phi={\rm{DA}}^{\succ}$ for some priority profile $\succ$.\medskip
\end{theorem}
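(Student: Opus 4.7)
The plan is to prove the two directions separately, using the uniqueness of the stable and strategy-proof mechanism (Alcalde and Barberà, 1994) as the closer for the non-trivial direction. For the sufficiency, fix any priority profile $\succ$; the mechanism ${\rm{DA}}^\succ$ is known to be individually rational, weakly non-wasteful, population-monotonic, and strategy-proof, while weak local non-bossiness follows from \autoref{loc_nonb} applied in each restricted context $[N,S,\succ^N,q]$. For weak WrARP, in a market where every student finds only $s$ acceptable and $\vert N\vert=q_s+1$, the deferred acceptance algorithm simply selects the top $q_s$ students under $\succ_s$, which immediately yields the WrARP condition.

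For the necessity, fix $\Phi$ satisfying the six axioms. My approach is to extract a priority profile $\succ$ from $\Phi$ and then argue $\Phi={\rm{DA}}^\succ$. For each $s\in S$, I would define $\succ_s$ using markets of size $q_s+1$ in which every student finds only $s$ acceptable: by individual rationality and weak non-wastefulness exactly $q_s$ students receive $s$ and one is left out, so declaring $i\succ_s j$ whenever $i$ is selected and $j$ is not yields a candidate relation. Weak WrARP is precisely what is needed to show that the definition does not depend on the chosen market, and varying the witnesses confirms that $\succ_s$ is a complete, transitive, strict linear order on $\overline{N}$.

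To finish, it suffices to show that $\Phi(N,P)$ is stable under $\succ^N$ for every $(N,P)$: combined with strategy-proofness and individual rationality, the Alcalde--Barberà uniqueness result then forces $\Phi={\rm{DA}}^\succ$. Non-wastefulness is handled by weak non-wastefulness together with strategy-proofness, while ruling out justified envy is the crux. Suppose $sP_i\Phi_i(N,P)$ and $i\succ_s j$ for some $j\in\Phi_s(N,P)$. The plan is to reduce this situation to a market of size $q_s+1$ where $\succ_s$ was defined, by successively truncating the preferences of each student currently assigned to $s$ so that only $s$ remains admissible---using weak local non-bossiness at each step to guarantee that $\Phi_s$ is preserved---and then invoking population-monotonicity to trim the population down to $\{i,j\}$ together with $q_s-1$ additional members of $\Phi_s(N,P)$. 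The defining property of $\succ_s$ in the resulting small market then contradicts the assumption that $i\succ_s j$ while $j$ is chosen over $i$. The hard part will be executing these reductions cleanly, i.e., controlling the interaction between preference changes and population shrinking while preserving the identity of the students seated at $s$. In the unit-capacity setting of Ehlers and Klaus (2016) this is immediate, but in the many-to-one setting weak local non-bossiness and weak WrARP are precisely the ingredients that make the control possible, which also explains why both axioms are trivially satisfied when $q_s=1$ for every $s$.
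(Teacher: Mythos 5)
Your sufficiency direction and your construction of the priority profile $\succ_s$ from the $(q_s+1)$-sized single-school markets match the paper: the paper likewise feeds weak non-wastefulness, population-monotonicity and weak WrARP into Chambers and Yenmez (2018, Theorem 2) to obtain, for each $s$, an order $\succ_s$ with the property that any student excluded from $\Phi_s(N,\overline{P})$ has lower priority than every student selected. The divergence, and the problem, is in how you close the necessity direction.

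Your plan is to prove directly that $\Phi(N,P)$ is stable under $\succ^N$ and then invoke Alcalde--Barber\`a. The crux is ruling out justified envy, and the reduction you sketch does not go through. Suppose $sP_i\Phi_i(N,P)$, $j\in\Phi_s(N,P)$ and $i\succ_s j$. Truncating each $k\in\Phi_s(N,P)$ to the preference $P^s$ does preserve $\Phi_s$ (strategy-proofness pins $k$ at $s$, then weak local non-bossiness preserves the entering class), but it gives you \emph{no control over $i$'s assignment}: weak local non-bossiness constrains only the school of the deviator's own assignment, and $i$ is not at $s$. After these truncations $i$ may well be assigned to something she prefers to $s$, in which case the envy instance has evaporated at the modified profile without contradicting anything at the original one. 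Symmetrically, if you truncate $i$ first, strategy-proofness forces $\Phi_i=s_0$ and weak non-wastefulness fills $s$, but the set $\Phi_s$ may now exclude $j$ entirely, so the hypothesis $i\succ_s j$ no longer bites. Population-monotonicity only gives one-sided (weak-improvement) control when you shrink to the $(q_s+1)$-market, so you cannot guarantee that $i$ is the student left out there; if one of the original members of $\Phi_s$ is left out instead, property (A) yields $i\succ_s(\text{that student})$, which is perfectly consistent with $i\succ_s j$ and produces no contradiction. What you flag as ``the hard part'' is exactly where the argument breaks, and the two new axioms do not supply the missing control. The paper avoids this entirely with a different global strategy: it picks a counterexample profile $P^*$ that \emph{maximizes} the number of students with at most one admissible school among all profiles where $\Phi$ and ${\rm{DA}}^{\succ}$ disagree, shows via individual rationality and strategy-proofness that every disagreeing student must already have a single admissible school (else the maximality is violated), uses weak local non-bossiness only to normalize the preferences of the students who \emph{agree} and sit at $s$, and then shrinks the population with population-monotonicity to contradict property (A). That maximality device is the idea your proposal is missing.
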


\noindent The proof of this result and examples that show that the six axioms are independent are provided in \autoref{appT2}.\\


In our characterization of DA, weak local non-bossiness is crucial to manage the schools with more than one seat available. Indeed, Example 1 of Ehlers and Klaus (2016)---which we reproduce below---allows us to describe a mechanism that satisfies all the axioms of \autoref{coro-bando} but weak local non-bossiness. \\

\begin{example}\label{EK-1} Let $\overline{N}=\{1,2,3,4\}$, $S=\{s_1,s_2\}$, and $(q_{s_1},q_{s_2})=(2,1)$. Consider the priority orders $\succ=(\succ_{s_1},\succ_{s_2})$ and $\succ'=(\succ'_{s_1},\succ_{s_2})$ characterized by  $\succ_{s_1}:1, 2, 3, 4$, $\succ_{s_2}: 1, 2, 3,4$, and $\succ'_{s_1}: 1, 2, 4, 3$. Denote by $top(P_i)$ the most preferred alternative of a student $i$ when her preferences are $P_i\in {\mathcal{L}}$. Let $\Omega:{\mathcal{N}}\times {\mathcal{P}}\rightarrow {\mathcal{M}}^*$ be the mechanism such that: $$\Omega(N,P)=
\left\{
\begin{array}{ll}
{\rm{DA}}^{\succ'}(N,P), &\quad\quad \mbox{when}\, \, \{1,2\}\subseteq N\,\mbox{and}\,top(P_1)=top(P_2)=s_2;\\
{\rm{DA}}^{\succ}(N,P), &\quad\quad \mbox{otherwise}. 
\end{array}
\right.
$$

Ehlers and Klaus (2016, Example 1) show that $\Omega$ is individually rational, weakly non-wasteful,  population-monotonic, and strategy-proof. Furthermore, $\Omega$ satisfies S-WrARP, because for each profile of preferences schools choose students from a set following a priority order. However, $\Omega$ is weakly locally bossy. Indeed, when preferences are given by 
$$P_1: s_2, s_0, s_1, \quad\quad  P_2: s_1, s_0, s_2, \quad\quad P_3: s_1, s_0, s_2 \quad\quad P_4: s_1, s_0, s_2,$$
we have that $\Omega(\overline{N},P)={\rm{DA}}^{\succ}(\overline{N},P)= ((1,s_2), (2,s_1), (3,s_1), (4,s_0)).$ If $P'_2: s_2,s_1,s_0$,  $\Omega(\overline{N},(P'_2,P_{-2}))={\rm{DA}}^{\succ'}(\overline{N},(P'_2,P_{-2}))= ((1,s_2), (2,s_1), (3,s_0), (4,s_1)).$ Therefore, the student $2$ can change her colleagues without modifying her school.\footnote{This argument also shows that $\Omega$ is not group strategy-proof. Indeed, the coalition $\{2,4\}$ can report preferences $(P'_2,P_4)$ to improve the assignment of $4$ without hurting $2$. Hence, the same example allows us to show that group strategy-proofness is a critical axiom in \autoref{coro-bando2}.} \hfill $\Box$\\
\end{example}

A priority profile $\succ$ has an {\bf{Ergin-cycle}} when for some $s,s'\in S$ and $i,j,k\in \overline{N}$ we have that $i\succ_{s} j \succ_s k \succ_{s'} i$ and there are disjoint sets $N_{s}, N_{s'}\subseteq \overline{N}\setminus\{i,j,k\}$, with $\vert N_{s}\vert =q_{s}-1$ and $\vert N_{s'}\vert =q_{s'}-1$, such that  $j$ has lower priority in $s$ than everyone in $N_{s}$ and $i$ has lower priority in $s'$ than everyone in $N_{s'}$. The priority profile $\succ$ is {\bf{acyclic}} when it does not have Ergin-cycles. \\

\begin{corollary}\label{coro-bando2}
A mechanism $\Phi:{\mathcal{N}}\times \mathcal{P}\rightarrow \mathcal{M}^*$ is individually rational, weakly non-wasteful,  population-monotonic, group strategy-proof, and satisfies S-WrARP if and only if $\Phi={\rm{DA}}^{\succ}$ for some acyclic priority profile $\succ$.\medskip
\end{corollary}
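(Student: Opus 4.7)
\medskip

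\noindent \textbf{Proof plan for \autoref{coro-bando2}.} My approach is to deduce the corollary from \autoref{coro-bando} together with two classical results of Ergin (2002) and P\'apai (2000). The overall strategy is to show that, in this setting, group strategy-proofness is exactly the strengthening of (strategy-proofness $+$ weak local non-bossiness) that is needed to force acyclicity of the underlying priority profile.

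\medskip

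\noindent \textbf{Necessity ($\Leftarrow$).} Suppose $\Phi={\rm{DA}}^{\succ}$ for some acyclic priority profile $\succ$. Individual rationality, weak non-wastefulness, and population-monotonicity hold for any ${\rm{DA}}^{\succ}$ and were already used in the proof of \autoref{coro-bando}; weak WrARP also holds because for each school $s$, ${\rm{DA}}^{\succ}$ selects among the students who apply to $s$ using the priority-based choice rule induced by $\succ_s$, which is rationalizable by a linear order (this is exactly the argument used for ${\rm{DA}}^{\succ}$ inside \autoref{coro-bando}). Finally, by Ergin (2002), $\succ$ being acyclic is equivalent to ${\rm{DA}}^{\succ}$ being non-bossy (equivalently, Pareto efficient), and hence, by P\'apai (2000), equivalent to ${\rm{DA}}^{\succ}$ being group strategy-proof. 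So $\Phi$ satisfies all five axioms.

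\medskip

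\noindent \textbf{Sufficiency ($\Rightarrow$).} Suppose $\Phi$ satisfies the five axioms. First, group strategy-proofness trivially implies strategy-proofness (take singleton coalitions). Second, fix any $N\in{\mathcal{N}}$ and note that the restriction $\Phi(N,\cdot)$ is a group strategy-proof mechanism on the fixed population $N$ in the sense of Section \ref{model}. By P\'apai's (2000) equivalence, $\Phi(N,\cdot)$ is therefore non-bossy on $N$; that is, for every $i\in N$, $P\in{\mathcal{P}}$, and $P'_i\in{\mathcal{L}}$, $\Phi_i(N,P)=\Phi_i(N,(P'_i,P_{-i}))$ implies $\Phi(N,P)=\Phi(N,(P'_i,P_{-i}))$. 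In particular, whenever the common assignment is some $s\in S$, the matchings coincide on the set $\Phi_s(N,\cdot)$, so $\Phi$ is weakly locally non-bossy. We have therefore verified all six hypotheses of \autoref{coro-bando}, and we conclude that $\Phi={\rm{DA}}^{\succ}$ for some priority profile $\succ$.

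\medskip

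\noindent It remains to show that this $\succ$ is acyclic. Suppose by contradiction that $\succ$ has an Ergin-cycle. Then by Ergin (2002), ${\rm{DA}}^{\succ}$ is bossy on some population, i.e.\ there exist $N\in{\mathcal{N}}$, $i\in N$, $P\in{\mathcal{P}}$, and $P'_i\in{\mathcal{L}}$ such that $\Phi_i(N,P)=\Phi_i(N,(P'_i,P_{-i}))$ but $\Phi(N,P)\neq\Phi(N,(P'_i,P_{-i}))$. By P\'apai's equivalence, $\Phi(N,\cdot)$ would then fail to be group strategy-proof on $N$, contradicting our hypothesis on $\Phi$. Hence $\succ$ is acyclic, completing the proof.

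\medskip

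\noindent \textbf{Main obstacle.} The only point that requires care is ensuring that the classical equivalence of P\'apai (2000) and the cycle characterization of Ergin (2002)---both stated for a fixed population---transfer to the variable-population framework of $\Phi:{\mathcal{N}}\times{\mathcal{P}}\to{\mathcal{M}}^*$. This transfer is immediate because both group strategy-proofness and non-bossiness are, by definition, properties that quantify over each fixed $N$ separately, and weak local non-bossiness is likewise phrased for fixed $N$; so the fixed-population equivalences apply population-by-population and then assemble into the variable-population statements needed to invoke \autoref{coro-bando}.
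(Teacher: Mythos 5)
Your proof is correct and follows essentially the same route as the paper: both directions reduce to \autoref{coro-bando} via P\'apai's (2000) equivalence of group strategy-proofness with strategy-proofness plus non-bossiness (which gives weak local non-bossiness for free) and Ergin's (2002) cycle characterization of when ${\rm{DA}}^{\succ}$ is group strategy-proof. Your added remarks on the population-by-population transfer of these fixed-population results are a sensible elaboration of what the paper leaves implicit, but they do not change the argument.
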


\begin{proof}
Group strategy-proofness ensures that strategy-proofness and non-bossiness hold (Pap\'ai, 2000). Hence, \autoref{coro-bando} implies that a mechanism $\Phi$ satisfying the five axioms described above coincides with ${\rm{DA}}^{\succ}$, for some priority profile $\succ$. In particular, as ${\rm{DA}}^{\succ}$ is group strategy-proof, $\succ$ is acyclic (Ergin, 2002).  Reciprocally, given an acyclic priority profile $\succ$, it follows from Ergin (2002, Theorem 1) that ${\rm{DA}}^\succ$ is group strategy-proof (and \autoref{coro-bando} ensures that ${\rm{DA}}^\succ$ satisfies the other axioms).
\end{proof}

\medskip

\section{Mechanisms under preferences over colleagues}\label{ext}

Most of the literature on school choice assumes that students only care about the school to which they are assigned. However, it is  unrealistic that students do not care about the assignment of others. In particular, it has been empirically documented that the identity of colleagues is an important consideration when students are choosing a school (Rothstein, 2006; Abdulkadiro\u{g}lu et al., 2020; Allende, 2021; Che et al., 2022; Beuermann et al., 2023; Cox et al., 2023).

When students have arbitrary preferences over the set of matchings, many of the results in the literature break down. Specifically, a stable matching may not exist, {\it{even when students only care about the identity of the other students assigned to her school}} (Echenique and Yenmez, 2007). One way to overcome this problem is to restrict the domain of preferences. For instance, suppose that students have \textit{school-lexicographic preferences}, meaning that they are primarily concerned with the school to which they are assigned (Sasaki and Toda, 1996; Dutta and Mass\'o, 1997). 
In this domain, some of the standard results are recovered, such as the existence of a stable matching. However, a stable and strategy-proof mechanism may not exist (Duque and Torres-Mart\'{\i}nez, 2023). One may wonder if there is a subdomain of school-lexicographic preferences where stability and strategy-proofness are compatible. In this section, we show that within the subdomain of school-lexicographic preferences where only the schoolmates matter, the local non-bossiness of ${\rm{DA}}$ implies that this rule induces a stable and strategy-proof mechanism. \medskip

We first define the domain of school-lexicographic preferences, and then the subdomain where there exists a stable and strategy-proof mechanism.\medskip

\begin{definition}\label{lexi} Given a school choice context $[N, S, \succ, q]$, a {\textbf{school-lexicographic preference}} for student $i\in N$ is a complete and transitive preference relation $\unrhd_i$ defined on ${\mathcal{M}}$ such that, for every pair of matchings $\mu,\eta\in {\mathcal{M}}$, the following properties hold:
\begin{itemize}
\item If $\mu(i)\neq \eta(i)$, then either $\mu \rhd_i  \eta$ or $\eta \rhd_i \mu$, where $\rhd_i$ is the strict part of $\unrhd_i$.
\item If $\mu \rhd_i \eta $ and $\mu(i)\neq\eta(i)$, then $\mu' \rhd_i \eta'$ for all matchings $\mu',\eta'\in {\mathcal{M}}$ such that $\mu'(i)=\mu(i)$ and $\eta'(i)=\eta(i)$.
\end{itemize} 
Let $\mathcal{D}$ be the set of preference profiles $\unrhd=(\unrhd_i)_{i\in N}$ satisfying these properties. 
\end{definition}

\medskip

The first condition states that a student cannot be indifferent between two matchings where she is assigned to different schools. By the second condition, if a student prefers a matching $\mu$ over $\eta$, and she is assigned to different schools in these matchings, then she should prefer every matching where she is assigned to the same school as in $\mu$ to every matching where she is assigned to the same school as in $\eta$.  

Notice that, in the preference domain ${\mathcal{D}}$ no restrictions are imposed on the ranking of two matchings in which a student is assigned to the same school. Let $P(\unrhd)=(P_i(\unrhd_i))_{i\in N}\in {\mathcal{P}}$ be the preferences over schools induced by $\unrhd=(\unrhd_i)_{i\in N} \in {\mathcal{D}}$ through the following rule: given $s,s'\in S\cup\{s_0\}$, $s \,P_i(\unrhd_i) \, s'$ if and only if there exist $\mu,\mu'\in {\mathcal{M}}$ such that $\mu(i)=s$, $\mu'(i)=s'$, and $\mu \rhd_i \mu'$. The second property in \autoref{lexi} ensures that $P(\unrhd)$ is well-defined.\
 
One can naturally extend the concepts of stability and strategy-proofness to the domain of school-lexicographic preferences using the induced preferences over schools. Given ${\mathcal{D}}'\subseteq {\mathcal{D}}$, a {\textbf{stable mechanism}} associates to each preference profile $\unrhd\in {\mathcal{D}}'$ a stable matching of $[N, S, \succ, q, \unrhd]$.  A mechanism $\Gamma$ is {\textbf{strategy-proof}} in ${\mathcal{D}}'$ when there is no student $i$ such that $\Gamma_i(\unrhd'_i, \unrhd_{-i}) \rhd_i  \Gamma_i(\unrhd)$ for some $\unrhd$ and $(\unrhd'_i, \unrhd_{-i})$ in ${\mathcal{D}}'$. \\

School-lexicographic preferences have no effect on stability, because the school choice problems $[N, S, \succ, q, \unrhd]$ and $[N, S, \succ, q, P(\unrhd)]$ have the same set of stable matchings  (Sasaki and Toda, 1996; Dutta and Mass\'o, 1997; Fonseca-Mairena and Triossi, 2023). However, there are strong effects on incentives as there may not exist a stable and strategy-proof mechanism (Duque and Torres-Mart\'{\i}nez, 2023). 

Within the domain of school-lexicographic preferences, let us consider a scenario where students are solely concerned about the composition of the school to which they are assigned. Consequently, a student will be indifferent between two matchings when she is assigned to the same school with the same colleagues. This is the idea of the following subdomain of preferences.\medskip

\begin{definition} Given $[N,S,\succ,q]$, the domain of {\textbf{school-lexicographic preferences over colleagues}} $\mathcal{D}_{{\rm{c}}}\subseteq \mathcal{D}$ is the set of profiles $(\unrhd_i)_{i\in N}$ such that $\mu \rhd_i \eta$ and $\mu(i)=  \eta(i)=s$ imply that $\mu(s)\neq \eta(s)$. \medskip
\end{definition}

In words, when $(\unrhd_i)_{i\in N}$ belongs to $ \mathcal{D}_{{\rm{c}}}$, each student $i$ strictly prefers $\mu$ to $\eta$ only when she is assigned to different schools or to the same school with different colleagues.\\

The next result shows that there is a stable and strategy-proof mechanism in $ \mathcal{D}_{{\rm{c}}}$, which is the student-optimal stable mechanism applied to $[N, S, \succ, q, P(\unrhd)]$.\\

\begin{theorem}\label{propo-ex}
In every school choice context $(S, N, \succ, q )$, the mechanism $\overline{\rm{DA}}:\mathcal{D}_{{\rm{c}}}\rightarrow \mathcal{M}$ defined by $\overline{\rm{DA}}(\unrhd)={\rm{DA}}(P(\unrhd))$ is stable and strategy-proof.\medskip
\end{theorem}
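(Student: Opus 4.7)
The plan is to handle stability and strategy-proofness separately, reducing both to statements about the classical mechanism ${\rm{DA}}$ via the mapping $\unrhd\mapsto P(\unrhd)$ built into the definition of $\overline{\rm{DA}}$.

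For stability, I would invoke the equivalence---recorded in the paragraph preceding the statement---between the stable matchings of $[N,S,\succ,q,\unrhd]$ and those of the classical problem $[N,S,\succ,q,P(\unrhd)]$ (Sasaki and Toda, 1996; Dutta and Mass\'o, 1997; Fonseca-Mairena and Triossi, 2023). Since ${\rm{DA}}(P(\unrhd))$ is stable for the latter problem, $\overline{\rm{DA}}(\unrhd)$ is automatically stable for $\unrhd$.

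For strategy-proofness, I would argue by contradiction. Suppose some $i\in N$ and profiles $\unrhd,(\unrhd'_i,\unrhd_{-i})\in \mathcal{D}_{{\rm{c}}}$ satisfy $\mu'\rhd_i\mu$, where $\mu=\overline{\rm{DA}}(\unrhd)={\rm{DA}}(P(\unrhd))$ and $\mu'=\overline{\rm{DA}}(\unrhd'_i,\unrhd_{-i})={\rm{DA}}(P_i(\unrhd'_i),P_{-i}(\unrhd))$, and I would split according to whether student $i$'s assigned school changes. If $\mu(i)\neq\mu'(i)$, the school-lexicographic property of $\unrhd_i$ together with the definition of $P_i(\unrhd)$ forces $\mu'(i)\,P_i(\unrhd)\,\mu(i)$; but the classical strategy-proofness of ${\rm{DA}}$ (Dubins and Freedman, 1981; Roth, 1982) applied with true preferences $P(\unrhd)$ gives $\mu(i)\,R_i(\unrhd)\,\mu'(i)$, hence $\mu(i)\,P_i(\unrhd)\,\mu'(i)$---a contradiction.

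If instead $\mu(i)=\mu'(i)=:s$, then $\unrhd_i\in \mathcal{D}_{{\rm{c}}}$ together with $\mu'\rhd_i\mu$ forces $\mu(s)\neq\mu'(s)$ by the very definition of the subdomain. On the other hand, from ${\rm{DA}}_i(P(\unrhd))={\rm{DA}}_i(P_i(\unrhd'_i),P_{-i}(\unrhd))=s$, Theorem~\ref{loc_nonb} (local non-bossiness of ${\rm{DA}}$) yields ${\rm{DA}}_s(P(\unrhd))={\rm{DA}}_s(P_i(\unrhd'_i),P_{-i}(\unrhd))$, i.e.\ $\mu(s)=\mu'(s)$, again a contradiction. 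Both cases being impossible, $\overline{\rm{DA}}$ is strategy-proof. The only real content is this last case: classical strategy-proofness of ${\rm{DA}}$ handles misreports aimed at changing one's school, while Theorem~\ref{loc_nonb} is precisely what is needed to block the new manipulation channel---changing one's colleagues while keeping the same school---that school-lexicographic-over-colleagues preferences open up. There is no genuine obstacle once Theorem~\ref{loc_nonb} is available.
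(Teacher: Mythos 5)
Your proposal is correct and follows essentially the same route as the paper: stability via the equivalence of stable sets under school-lexicographic preferences, and strategy-proofness by the same two-case split, with classical strategy-proofness of ${\rm{DA}}$ handling the case $\mu(i)\neq\mu'(i)$ and \autoref{loc_nonb} handling the case $\mu(i)=\mu'(i)$. No differences worth noting.
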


\begin{proof} Fix $\unrhd\in \mathcal{D}_{{\rm{c}}}$. The stability of $\overline{\rm{DA}}(\unrhd)$ is a consequence of the fact that ${\rm{DA}}:{\mathcal{P}}\rightarrow {\mathcal{M}}$ is stable, because $[N, S, \succ, q, \unrhd]$ and $[N, S, \succ, q, P(\unrhd)]$ have the same stable matchings. By contradiction, assume that $\overline{\rm{DA}}$ is not strategy-proof in $\unrhd$. Hence, there exists $i\in N$ such that $\overline{\rm{DA}}(\unrhd'_i, \unrhd_{-i})\,\rhd_i\, \overline{\rm{DA}}(\unrhd)$ for some $\unrhd'_i$ such that $\unrhd'\equiv (\unrhd'_i, \unrhd_{-i})$ belongs to $\mathcal{D}_{{\rm{c}}}$. Since $\overline{\rm{DA}}(\unrhd'_i, \unrhd_{-i})\neq \overline{\rm{DA}}(\unrhd)$, we have two cases:
\begin{itemize}
\item {\it{Case 1: $\overline{\rm{DA}}_i(\unrhd'_i, \unrhd_{-i})\neq \overline{\rm{DA}}_i(\unrhd)$.}} The definitions of $\overline{\rm{DA}}$ and $P_i(\unrhd_i)$ ensure that $${\rm{DA}}_i(P_i(\unrhd'_i), P_{-i}(\unrhd))\,\,P_i(\unrhd_i)\,\,{\rm{DA}}_i(P(\unrhd)).$$ This implies that ${\rm{DA}}$ is not strategy-proof in ${\mathcal{P}}$. A contradiction.
\item {\it{Case 2: $\overline{\rm{DA}}_i(\unrhd'_i, \unrhd_{-i})=\overline{\rm{DA}}_i(\unrhd)=s$.}} The definitions of $\overline{\rm{DA}}$ and $\mathcal{D}_{{\rm{c}}}$ ensure that 
\begin{eqnarray*}
{\rm{DA}}_i(P_i(\unrhd'_i), P_{-i}(\unrhd))&=&{\rm{DA}}_i(P(\unrhd))\,=\,s,\\ {\rm{DA}}_s(P_i(\unrhd'_i), P_{-i}(\unrhd))&\neq &{\rm{DA}}_s(P(\unrhd)).
\end{eqnarray*} This implies that ${\rm{DA}}$ is locally bossy, which contradicts \autoref{loc_nonb}.
\end{itemize}

Therefore, the mechanism $\overline{\rm{DA}}:\mathcal{D}_{{\rm{c}}}\rightarrow \mathcal{M}$  is strategy-proof. 
\end{proof}

$\,$

Every stable mechanism $\Phi:{\mathcal{P}}\rightarrow {\mathcal{M}}$ induces a stable mechanism on $\mathcal{D}_{{\rm{c}}}$ by the rule that associates each $\unrhd\in \mathcal{D}_{{\rm{c}}}$ with the matching $\Phi(P(\unrhd))$. Evidently, many other stable mechanisms can be defined in $\mathcal{D}_{{\rm{c}}}$ using the information that students reveal about how they rank their colleagues. Despite this multiplicity, the uniqueness result of Alcalde and Barber\`a (1994) can be extended to the preference domain $\mathcal{D}_{{\rm{c}}}$.\\

\begin{corollary}\label{unique}
On the preference domain $\mathcal{D}_{{\rm{c}}}$ the mechanism $\overline{\rm{DA}}$ is the only one that is stable and strategy-proof.\medskip
\end{corollary}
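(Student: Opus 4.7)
The plan is to adapt the classical Alcalde--Barber\`a (1994) uniqueness argument to the enlarged domain $\mathcal{D}_{{\rm{c}}}$, relying on two features already established in the paper: first, a matching is stable under $\unrhd\in\mathcal{D}_{{\rm{c}}}$ if and only if it is stable under the induced school-preference profile $P(\unrhd)$; and second, by school-lexicography, whenever student $i$ is assigned to different schools in two matchings, her strict ranking of them under $\unrhd_i$ is dictated exactly by $P_i(\unrhd)$. Let $\Gamma:\mathcal{D}_{{\rm{c}}}\to\mathcal{M}$ be any stable and strategy-proof mechanism; fix $\unrhd\in \mathcal{D}_{{\rm{c}}}$, set $P=P(\unrhd)$ and $\mu^*={\rm{DA}}(P)=\overline{\rm{DA}}(\unrhd)$. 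The goal is to show that $\Gamma(\unrhd)=\mu^*$.

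First, because $\Gamma(\unrhd)$ is stable under $P$, the student-optimality of $\mu^*$ yields $\mu^*(i)\,R_i\,\Gamma_i(\unrhd)$ for every $i$, where $R_i$ is the weak order induced by $P_i$. Suppose, for contradiction, that some $i$ satisfies $\mu^*(i)\,P_i\,\Gamma_i(\unrhd)$. The key move is to let $i$ report the truncation $P'_i$ of $P_i$ that keeps only $\mu^*(i)$ admissible, extended to a preference $\unrhd'_i$ with $(\unrhd'_i,\unrhd_{-i})\in \mathcal{D}_{{\rm{c}}}$; such an extension always exists since $\mathcal{D}_{{\rm{c}}}$ constrains only the ordering among matchings that share $i$'s school and colleagues. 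The matching $\mu^*$ is still stable under $(P'_i,P_{-i})$: no blocking pair can involve $i$, because nothing beats $\mu^*(i)$ under $P'_i$, and the other agents' preferences are unchanged. By the Rural Hospital Theorem applied to $\mu^*$ and $\Gamma(\unrhd'_i,\unrhd_{-i})$, student $i$ must be assigned in $\Gamma(\unrhd'_i,\unrhd_{-i})$, necessarily to her unique admissible school, so $\Gamma_i(\unrhd'_i,\unrhd_{-i})=\mu^*(i)$. Since $\Gamma_i(\unrhd)\neq \mu^*(i)=\Gamma_i(\unrhd'_i,\unrhd_{-i})$, the school-lexicographic property lifts $\mu^*(i)\,P_i\,\Gamma_i(\unrhd)$ to $\Gamma(\unrhd'_i,\unrhd_{-i})\,\rhd_i\,\Gamma(\unrhd)$, contradicting strategy-proofness of $\Gamma$. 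Hence $\Gamma_i(\unrhd)=\mu^*(i)$ for every $i$, and $\Gamma(\unrhd)=\mu^*$.

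The main subtlety I expect is checking that the truncation manipulation actually lives inside $\mathcal{D}_{{\rm{c}}}$: one must exhibit a school-lexicographic-over-colleagues report whose induced school ranking is the truncated $P'_i$. This is easy---rank schools according to $P'_i$ and break the remaining ties among matchings sharing the same school-and-colleague structure however one wishes---but it is the step without which the whole argument collapses, because the manipulation $\Gamma$ is supposed to be vulnerable to would otherwise not be a legal report. Once it is in place, everything reduces to the student-optimality of ${\rm{DA}}$, the Rural Hospital Theorem, and the school-lexicographic property of $\unrhd_i$, which together transport Alcalde and Barber\`a's strategy-proofness argument into the new domain. Note that, in contrast to the proof of \autoref{propo-ex}, local non-bossiness of ${\rm{DA}}$ plays no role here: uniqueness is obtained purely from stability and the single-deviation manipulation that changes $i$'s assigned school.
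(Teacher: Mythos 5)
Your proposal is correct and follows essentially the same route as the paper: both establish that any stable $\Gamma$ is weakly dominated school-wise by $\overline{\rm{DA}}$ via student-optimality, then use the truncation of $P_i(\unrhd)$ to the single school $\overline{\rm{DA}}_i(\unrhd)$, the invariance of the stable set under school-lexicographic preferences, and the Rural Hospital Theorem to force $\Gamma_i(\unrhd'_i,\unrhd_{-i})=\overline{\rm{DA}}_i(\unrhd)$, contradicting strategy-proofness through the lexicographic lift of the school comparison. Your explicit check that the truncated report admits an extension inside $\mathcal{D}_{\rm{c}}$ is a point the paper passes over silently, and your observation that local non-bossiness enters only the existence half (\autoref{propo-ex}), not the uniqueness half, is accurate.
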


\begin{proof} It follows from \autoref{propo-ex} that $\overline{\rm{DA}}$ is stable and strategy-proof on the preference domain $\mathcal{D}_{{\rm{c}}}$. By contradiction, assume that there is a stable and strategy-proof mechanism $\Omega:\mathcal{D}_{{\rm{c}}}\rightarrow {\mathcal{M}}$ such that $\Omega(\unrhd)\neq \overline{\rm{DA}}(\unrhd)$ for some $\unrhd\in \mathcal{D}_{{\rm{c}}}$. Since $\mathcal{D}_{{\rm{c}}}$ only includes school-lexicographic preferences, $\overline{\rm{DA}}(\unrhd)$ pairs each student to the most preferred alternative in $S\cup\{s_0\}$ that she can obtain in a stable outcome of $(S, N, \succ, q, \unrhd )$.\footnote{That is, there is no stable matching $\mu$ such that $\mu(i)\,P_i(\unrhd_i)\,\overline{\rm{DA}}_i(\unrhd)$ for some $i\in N$.} Thus, $\overline{\rm{DA}}_i(\unrhd) \rhd_i \Omega_i(\unrhd)$ for some $i\in N$. In particular,  $\overline{\rm{DA}}_i(\unrhd)$ is a school. 

Let $P'_i$ be the preferences defined on $S\cup \{s_0\}$ for which $\overline{\rm{DA}}_i(\unrhd)$  is the only acceptable school (i.e., $s_0 P'_i s$ for all school $s\neq \overline{\rm{DA}}_i(\unrhd)$). Fix $\unrhd'=(\unrhd'_{i},\unrhd_{-i})\in \mathcal{D}_{{\rm{c}}}$ such that $P_i(\unrhd'_i)=P_i'$. Since the problems $(S, N, \succ, q,\unrhd' )$  and $(S, N, \succ, q, (P'_i, P_{-i}(\unrhd)))$ have the same stable matchings, and $\overline{\rm{DA}}(\unrhd)$ is stable under $(P'_i, P_{-i}(\unrhd))$, the definition of $P'_i$ and the Rural Hospital Theorem (Roth, 1984b; Gale and Sotomayor, 1985) imply that $\Omega_i (\unrhd'_{i},\unrhd_{-i})=\overline{\rm{DA}}_i(\unrhd)$. Therefore, $\Omega_i (\unrhd'_{i},\unrhd_{-i}) \rhd_{i} \Omega_i(\unrhd)$, which contradicts the strategy-proofness of $\Omega$.
\end{proof}

\medskip

It is enough for the preferences of just one student to be school-lexicographic but not school-lexicographic over colleagues to prevent the existence of a stable and strategy-proof mechanism. The next example formalizes this claim by following the arguments made in the proof of Theorem 1 of Duque and Torres-Mart\'{\i}nez (2023).\\

\begin{example}\label{EX2}
Let $N=\{1,2,3,4,5\}$, $S=\{s_1,s_2,s_3,s_4\}$, $q_{s_1}=2$, and $q_{s_k}=1$ for all $k\neq 1$. Assume that schools' priorities satisfy the following conditions:
$$\succ_{s_1}: 4,2,1,3,5,  \quad\quad \succ_{s_2}:3,2,\ldots \quad\quad  \succ_{s_3}:1,2,\ldots \quad\quad \succ_{s_4}:2, 5,\ldots.$$ 

Let ${\mathcal{D}}_1$ be the preference domain in which student $1$ has school-lexicographic preferences and the other students have school-lexicographic preferences over colleagues. We claim that no stable mechanism defined in ${\mathcal{D}}_1$ is strategy-proof. 

Consider a preference profile $\unrhd=(\unrhd_i)_{i\in N}\in\mathcal{D}_1$ such that:
$$P_1(\unrhd_1):\, s_3\ldots; \quad\quad\quad P_2(\unrhd_2):\, s_2,s_1,\ldots; \quad\quad\quad  P_3(\unrhd_3):\, s_1,s_2,
\ldots;$$ $$ P_4(\unrhd_4):\,  s_1,\ldots; \quad\quad\quad P_5(\unrhd_5):\, s_4,\ldots.$$

Note that $[N,S, \succ,  q,P(\unrhd)]$ has only two stable matchings:
\begin{eqnarray*}
\mu&=&((1,s_3), (2,s_1), (3,s_2), (4,s_1), (5,s_4)),\\ 
\eta&=&((1,s_3), (2,s_2), (3,s_1), (4,s_1), (5,s_4)).
\end{eqnarray*}

Since there are no restrictions on how student $1$ ranks $\mu$ and $\eta$, because she is assigned to the same school in both matchings and has  school-lexicographic preferences, assume that she strictly prefers $\mu$ to $\eta$ under $\unrhd_1$.

It is not difficult to verify that $\mu$ and $\eta$ are the only stable matchings of $[N,S, \succ,  q,\unrhd]$. Hence, if a mechanism $\Phi:\mathcal{D}_1\rightarrow {\mathcal{M}}$ is stable, then $\Phi(\unrhd)\in \{\mu,\eta\}$.

Suppose that $\Phi(\unrhd)=\mu$. If  $\unrhd'_{2}$ is a school-lexicographic preference over colleagues such that $P_2'\equiv P_2(\unrhd'_2)$ satisfies $P'_2:s_2,s_4,\ldots$, then $\eta$ is the only stable matching of $[N,S, \succ,  q,(\unrhd'_2,\unrhd_{-2})]$. Hence, the student $2$ has incentives to manipulate $\Phi$, because $s_2 P_2(\unrhd_2)s_1$ implies that $\Phi(\unrhd'_2,\unrhd_{-2})=\eta \rhd_2 \mu= \Phi(\unrhd)$.

Suppose that $\Phi(\unrhd)=\eta$. If $\unrhd'_{1}$ is such that $P_1'\equiv P_1(\unrhd'_1)$ satisfies $P'_1: s_1,s_3,\ldots$, then $\mu$ is the only stable matching of $[N,S, \succ,  q,(\unrhd'_1,\unrhd_{-1})]$. Hence, student $1$ has incentives to manipulate $\Phi$, because $\Phi(\unrhd'_1,\unrhd_{-1})=\mu \rhd_1 \eta= \Phi(\unrhd).$

We conclude that no stable mechanism defined in ${\mathcal{D}}_1$ is strategy-proof.\hfill $\Box$\\
\end{example}

Given a school choice context $[N,S,\succ,q]$ and a domain ${\mathcal{D}}'$ such that ${\mathcal{D}}_{\rm{c}}\subsetneq {\mathcal{D}}'\subseteq {\mathcal{D}}$, to ensure that no mechanism defined in ${\mathcal{D}}'$ is stable and strategy-proof it is crucial that $\succ$ has a cycle in the sense of Ergin (2002).\footnote{See the previous section for the definition of Ergin-cycle. The priority profile of \autoref{EX2} has an Ergin-cycle characterized by $(s,s')=(s_1,s_2)$, $(i,j,k)=(2,1,3)$, $N_{s}=\{4\}$, and $N_{s'}=\emptyset$.} Indeed, the mechanism $\overline{\rm{DA}}$ is stable and strategy-proof in ${\mathcal{D}}$ whenever $\succ$ is Ergin-acyclic (see Duque and Torres-Mart\'{\i}nez, 2023).\\

\section{Concluding remarks}\label{conclu}

In this paper, we have shown that the bossiness of the student-optimal stable mechanism is limited. When ${\rm{DA}}$ is implemented, it is well known that a student can affect another student's assignment by changing her preferences without modifying her school. We have demonstrated that this holds only for students who are not assigned to the same school. In other words, under ${\rm{DA}}$, a student cannot change her schoolmates without changing her own school. Additionally, for every strategy-proof mechanism, local non-bossiness guarantees that no coalition of students assigned to the same school can misrepresent their preferences to either improve their assignments or maintain their school while modifying their colleagues. Since ${\rm{DA}}$ is not group strategy-proof, our result implies that every successful manipulating coalition of DA includes students from different schools. 

Local non-bossiness is not only interesting in itself but also because of its application to school choice problems with externalities. In this framework, even when students prioritize their assigned school first and then consider the assignment of others, a stable and strategy-proof mechanism may not exist (Duque and Torres-Mart\'{\i}nez, 2023). However, if we restrict preferences to be such that only the own school and the schoolmates matter, it turns out that ${\rm{DA}}$ induces the only stable and strategy-proof mechanism. The crucial property behind this result is precisely its local non-bossiness.\\

\newpage

\appendix

\section{local non-bossiness and its implications}\label{app2}\medskip

We analyze the relationship between local non-bossiness and other incentive properties, such as strategy-proofness, (local) group strategy-proofness, weak non-bossiness (Bando and Imamura, 2016), and a local notion of group non-bossiness (Afacan, 2012).  The following diagram summarizes the causal relationships between these incentive properties that we formalize in this appendix.\\

\begin{figure}[H]
    \centering
\quad{\resizebox{15cm}{!}{
$$\begin{tikzpicture}[>=stealth,node distance=2.5cm and 0.3cm]
    \node [shape=ellipse,fill={rgb:black,1;white,15},draw,rounded corners](c1) {{{Local non-bossiness \,  + \, Strategy-proofness }}};
        \node [shape=ellipse,fill={rgb:black,1;white,15},draw,rounded corners](c0) [above =1.5cm and -2cm of c1]{{{\, Group strategy-proofness \,}}};
    \node [shape=ellipse,fill={rgb:black,1;white,15},draw,rounded corners](c2) [below left =2cm and -2cm of c1]{{{Local group strategy-proofness}}};
    \node [shape=ellipse,fill={rgb:black,1;white,15},draw,rounded corners](c3) [below right =2cm and -2cm of c1]{{{Local group non-bossiness}}};
        \node [](c12) [below right =5cm and -12.7cm of c1]{{{}}};
                \node [](c13) [below right =2.5cm and 4.2cm of c1]{{{}}};
    \node [shape=ellipse,fill={rgb:black,1;white,15},draw,rounded corners](c8) [below right =10cm and -5.6cm of c1]{{{Strategy-proofness}}};
     \node [shape=ellipse,fill={rgb:black,1;white,15},draw,rounded corners](c10) [below left =7.5 cm and -5.6cm of c1]{{{Weak non-bossiness}}};
    \node [shape=ellipse,fill={rgb:black,1;white,15},draw,rounded corners](c9) [below left =5 cm and -5.6cm of c1]{{{Local non-bossiness}}};
     \draw[-{Stealth[length=3mm, width=2mm]}, degil] (c1.north east) to[out=120,in=-30]  node[midway,below = -0.2cm] {{\scriptsize{DA}}\,\,\quad\quad\quad\quad} (c0.south east);
      \draw[-{Stealth[length=3mm, width=2mm]}] (c0.south west) to[out=-150,in=60]  (c1.north west);
    \draw[-{Stealth[length=3mm, width=2mm]}] (c1.south) to[out=-130,in=45]  node[midway,above = 0.2cm] {{\scriptsize{Lemma 1}}\quad} (c2.north);
        \draw[-{Stealth[length=3mm, width=2mm]}] (c1.south) to[out=-50,in=135] node[midway,above = 0.2cm] {\,\,\,{\scriptsize{Lemma 2}}} (c3.north);
          \draw[-{Stealth[length=3mm, width=2mm]}, degil] (c2.south) to[out=-90,in=175] node[midway,above =0cm] {{\scriptsize{\quad \quad \quad\quad\quad\,\,\,Example 4}}} (c9.west);
            \draw[-{Stealth[length=3mm, width=2mm]}, degil] (c3.south east) to[out=-20,in=0] node[midway,below =0.4cm] {{\scriptsize{\quad \quad\quad \quad \quad\quad  \quad\quad \quad   Boston mechanism}}}
            (c8.east);
               \draw[-{Stealth[length=3mm, width=2mm]}, degil] (c8.west) to[out=175,in=-90] node[midway,above =0.4cm] {\quad {\quad\quad\scriptsize{Example 6}}} (c2.south west);
            \draw[-{Stealth[length=3mm, width=2mm]}, degil] (c9.north west) to[out=45,in=185] node[midway,below =0.1cm] {\quad\quad\quad\quad\,\,\,{\scriptsize{Example 5}}} (c3.west);
            \draw[-{Stealth[length=3mm, width=2mm]}, degil] (c10.north) to[out=160,in=-70] node[midway,right=0.2cm] {{\scriptsize{Lemma 3}}\quad\quad\quad} (c9.south west);
             \draw[-{Stealth[length=3mm, width=2mm]}, degil] (c10.south east) to[out=-120,in=30] node[midway,below=-0.2cm] {\quad\quad\quad\,\,\,\,\,\,\,\,{\scriptsize{Lemma 3}}} (c8.north);
            \draw[-{Stealth[length=3mm, width=2mm]}] (c2.south) to[out=-120,in=150] (c8.north west);
                \draw[-{Stealth[length=3mm, width=2mm]}] (c3.south east) to[out=-100,in=0] (c9.east);
                 \draw[-{[length=3mm, width=2mm]}] (c0.south west) to[out=185,in=110] (c12.west);
                \draw[-{Stealth[length=3mm, width=2mm]}] (c12.west) to[out=-70,in=200] (c10.south west);
                 \draw[-{[length=3mm, width=2mm]}] (c1.east) to[out=0,in=90] (c13.south);
                  \draw[-{Stealth[length=3mm, width=2mm]},degil] (c13.south) to[out=-90,in=0] 
                  node[midway,above =0cm] {{\scriptsize{Lemma 3 \quad\quad\quad\quad\quad \quad\quad }}}
                  (c10.east);
\end{tikzpicture}
$$
}}
 \caption{On local non-bossiness and other incentive properties.}
        \label{diag}
\end{figure}

$\,$

With the information provided in \autoref{diag}, it can be easily inferred that all absent causal relationships are not satisfied.\\

\begin{lemma}\label{papai}
If $\Phi:{\mathcal{P}}\rightarrow {\mathcal{M}}$ is a locally non-bossy and strategy-proof mechanism, then it is locally group strategy-proof.\medskip
\end{lemma}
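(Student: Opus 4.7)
My proof proposal proceeds by contradiction, adapting Pap\'ai's (2000) classical argument to the local setting. The plan is to suppose $\Phi$ is strategy-proof and locally non-bossy but not locally group strategy-proof, and let $(P,s,C,P'_C)$ witness the failure with $|C|=m$ minimal. The case $m=1$ is an immediate violation of strategy-proofness, so $m\geq 2$.

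Next, I would fix a strict improver $i^{*}\in C$ and---using $m\geq 2$---pick an auxiliary agent $k\in C$ such that $C\setminus\{k\}$ still contains a strict improver; this is possible by taking $k\neq i^{*}$ whenever $i^{*}$ is the unique strict improver, and $k$ arbitrary otherwise. Applying strategy-proofness at $P$ to the unilateral deviation $(P'_k,P_{-k})$ yields $s=\Phi_k(P)\,R_k\,\Phi_k(P'_k,P_{-k})$.

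The key case is the equality $\Phi_k(P'_k,P_{-k})=s$. Then local non-bossiness applied to $P$ and $(P'_k,P_{-k})$---which differ only in agent $k$'s report and both assign $k$ to $s$---delivers $\Phi_s(P'_k,P_{-k})=\Phi_s(P)$. Hence every $j\in C\setminus\{k\}\subseteq\Phi_s(P)$ remains at $s$ after $k$'s unilateral deviation, i.e., $\Phi_j(P'_k,P_{-k})=\Phi_j(P)=s$. Consequently, the sub-coalition $C\setminus\{k\}\subseteq\Phi_s(P'_k,P_{-k})$ locally blocks the profile $(P'_k,P_{-k})$ via $P'_{C\setminus\{k\}}$ (whose outcome is $(P'_C,P_{-C})$): each member of $C\setminus\{k\}$ weakly improves over her $(P'_k,P_{-k})$-assignment (which equals her $P$-assignment), while the retained strict improver improves strictly. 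This contradicts the minimality of $|C|$.

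The hardest part will be the strict-inequality case $s\,P_k\,\Phi_k(P'_k,P_{-k})$, in which agent $k$'s unilateral deviation moves her strictly away from $s$, so local non-bossiness cannot be invoked at the pair $\bigl(P,(P'_k,P_{-k})\bigr)$. My plan is to attack this case symmetrically from the endpoint $P^{m}:=(P'_C,P_{-C})$: strategy-proofness at $P^{m}$ for agent $k$ yields $\Phi_k(P_k,P'_{C\setminus\{k\}},P_{-C})\,R_k\,\Phi_k(P^{m})\,R_k\,s$. Combining this with iterated applications of local non-bossiness along the chain of intermediate profiles $P^{0}=P,P^{1},\ldots,P^{m}$ should locate a step at which both the chosen agent's assignment to $s$ and the set $\Phi_s$ are preserved, allowing one further coalition member to be peeled off and again contradicting minimality. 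The subtlety is that, in contrast to Pap\'ai's global non-bossiness, local non-bossiness only controls the set of students assigned to a specific school rather than the full matching, so the reduction must carefully track both which school each coalition member is assigned to and how $\Phi_s$ evolves as reports change.
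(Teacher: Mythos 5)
Your Case 1 (where $\Phi_k(P'_k,P_{-k})=s$) is correct and the minimal-counterexample bookkeeping there is sound: local non-bossiness freezes $\Phi_s$, so $(P'_k,P_{-k})$ together with the sub-coalition $C\setminus\{k\}$ and reports $P'_{C\setminus\{k\}}$ is a strictly smaller witness. The genuine gap is Case 2, which you correctly identify as the hard part but do not actually close. The obstruction is that strategy-proofness applied to a unilateral swap between two \emph{arbitrary} preferences $P_k$ and $P'_k$ only yields two weak inequalities measured in two different orders --- $\Phi_k(P)\,R_k\,\Phi_k(P'_k,P_{-k})$ and $\Phi_k(P'_k,P_{-k})\,R'_k\,\Phi_k(P)$ --- and these never combine to force the equality of assignments that local non-bossiness needs as its trigger. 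Your proposed ``symmetric attack from the endpoint $P^m$'' runs into exactly the same wall: at the pair $\bigl((P_k,P'_{C\setminus\{k\}},P_{-C}),\,P^m\bigr)$ you again cannot conclude that $k$'s assignment is unchanged, so no step of the chain is guaranteed to preserve $\Phi_s$, and the phrase ``should locate a step'' is doing all the work without justification. Since $\Phi_k(P'_k,P_{-k})$ can perfectly well be $s_0$ or some unrelated school, Case 2 as written is not a proof.

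The missing idea --- and the one the paper uses, following Pap\'ai (2000) --- is a monotonic transformation of the deviation: instead of chaining through the raw reports $P'_k$, replace each $P'_k$ by the preference $\widehat{P}_k$ that lifts $\Phi_k(P'_C,P_{-C})$ to the top and keeps the remaining alternatives in the order of $P_k$. Because $\Phi_k(P'_C,P_{-C})\,R_k\,\Phi_k(P)$, strategy-proofness applied in \emph{both} directions to the unilateral swap $P_k\leftrightarrow\widehat{P}_k$ forces $\Phi_k(\widehat{P}_k,P_{-k})=\Phi_k(P)=s$, which is precisely the equality that unlocks local non-bossiness; your ``hard case'' simply cannot arise along this chain. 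The paper then makes a forward pass replacing $P_{i_1},\ldots,P_{i_r}$ by $\widehat{P}_{i_1},\ldots,\widehat{P}_{i_r}$ one at a time (so $\Phi_s$ never moves), and a backward pass from $(P'_C,P_{-C})$ to $(\widehat{P}_C,P_{-C})$ one agent at a time, where strategy-proofness again pins the assignment because each agent's $(P'_C,P_{-C})$-assignment is top-ranked under her $\widehat{P}$; comparing the two endpoints gives $\Phi_i(P'_C,P_{-C})=\Phi_i(P)$ for all $i\in C$, with no induction on coalition size needed. If you want to salvage your minimality scheme, you should peel off agent $k$ via $\widehat{P}_k$ rather than $P'_k$, but you will then still need the backward pass to relate the $\widehat{P}$-profile to the actual deviation $(P'_C,P_{-C})$.
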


\begin{proof} 
Given  $P\in {\mathcal{P}}$ and $s\in S\cup \{s_0\}$, suppose that there is a coalition $C=\{i_1,\ldots, i_r\}$ contained in $\Phi_s(P)$ such that $\Phi_i(P'_{C}, P_{-C}) R_i  \,\Phi_i(P)$ for all $i\in C$ and for some $P'_C=(P'_j)_{j\in C}$. We will prove that $\Phi_i(P'_{C}, P_{-C}) =  \,\Phi_i(P)$ for all $i\in C$. For each $i\in \Phi_s(P)$, let $\widehat{P}_i$ be the preference relation that places $\Phi_i(P'_{C}, P_{-C})$ at the top and keeps the other schools in the order induced by $P_i$. 

Suppose that $\Phi_{i_1}(P) \neq \Phi_{i_1}({\widehat{P}}_{i_1}, P_{-i_1})$. The strategy-proofness of $\Phi$ implies that $\Phi_{i_1}(P) P_{i_1} \Phi_{i_1}({\widehat{P}}_{i_1}, P_{-i_1})$ and $\Phi_{i_1}({\widehat{P}}_{i_1}, P_{-i_1})\widehat P_{i_1} \Phi_{i_1}(P).$ If $\Phi_{i_1}({\widehat{P}}_{i_1}, P_{-i_1})\neq \Phi_{i_1}(P'_{C}, P_{-C})$, the definition of $\widehat{P}_{i_1}$ implies that $\Phi_{i_1}({\widehat{P}}_{i_1}, P_{-i_1}) P_i \Phi_{i_1}(P)$. Hence, $\Phi_{i_1}(P) P_{i_1} \Phi_{i_1}({\widehat{P}}_{i_1}, P_{-i_1}) P_{i_1} \Phi_{i_1}(P)$, which is not possible. As a consequence, 
$\Phi_{i_1}(P) \neq \Phi_{i_1}({\widehat{P}}_{i_1}, P_{-i_1})$ ensures that $\Phi_{i_1}({\widehat{P}}_{i_1}, P_{-i_1})= \Phi_{i_1}(P'_{C}, P_{-C})$, which in turn implies that $\Phi_{i_1}(P) P_{i_1} \Phi_{i_1}(P'_{C}, P_{-C})$, a contradiction. 

Since $\Phi_{i_1}(P) =\Phi_{i_1}({\widehat{P}}_{i_1}, P_{-i_1})$, the  local non-bossiness of $\Phi$ implies that $  \Phi_{i}(P)= \Phi_{i}(\widehat{P}_{i_1}, P_{-i_1})$ for all $i\in \Phi_s(P)$. Repeating the arguments above for students $i_2$,\ldots, $i_r$ we obtain that $ \Phi_{i}(P)=\Phi_{i}(\widehat{P}_{C}, P_{-C})$ for all $i\in \Phi_s(P).$ 

On the other hand, as $\Phi_{i_1}(P'_{C}, P_{-C})$ is the most preferred alternative under $\widehat{P}_{i_1}$, it follows that $\Phi_{i_1}(P'_{C}, P_{-C})\widehat R_{i_1}\Phi_{i_1}(\widehat{P}_{i_1}, P'_{C\setminus\{i_1\}}, P_{-C})$. If $\Phi_{i_1}(P'_{C}, P_{-C})\neq \Phi_{i_1}(\widehat{P}_{i_1}, P'_{C\setminus\{i_1\}}, P_{-C})$, then $\Phi_{i_1}(P'_{C}, P_{-C})\widehat P_{i_1}\Phi_{i_1}(\widehat{P}_{i_1}, P'_{C\setminus\{i_1\}}, P_{-C})$, which contradicts the strategy-proofness of $\Phi$. Moreover, as $\Phi$ is locally non-bossy, $\Phi_{i}(P'_{C}, P_{-C})=\Phi_{i}(\widehat{P}_{i_1}, P'_{C\setminus\{i_1\}}, P_{-C})$ for all $i\in \Phi_s(P)$. Repeating this argument for students $i_2$,\ldots, $i_r$, we obtain that $\Phi_{i}(P'_{C}, P_{-C})=\Phi_i(\widehat{P}_{C}, P_{-C})$ for all $i\in \Phi_s(P).$ 

Therefore, $\Phi_i(P'_{C}, P_{-C}) = \Phi_i(P)$ for every student $i\in C$.
\end{proof}

\medskip

In \autoref{ex2_app} we show that the converse of \autoref{papai} does not hold.\\

Afacan (2012) introduces the following extension of non-bossiness to groups of students:  a mechanism $\Phi$ is {\bf{group non-bossy}} when for every $P\in {\mathcal{P}}$ and $C\subseteq N$, if there exists $P'_{C}\in {\mathcal{L}}^{\vert C\vert}$ such that $\Phi_i(P) =  \Phi_i(P'_C, P_{-C})$ for all $i\in C$, then $\Phi(P) =  \Phi(P'_C, P_{-C})$. 

We will consider a local version of this property: a mechanism $\Phi$ is {\bf{locally group non-bossy}} when for every $s\in S\cup \{s_0\}$, $P\in {\mathcal{P}}$, and $C\subseteq \Phi_s(P)$, if there exists $P'_{C}\in {\mathcal{L}}^{\vert C\vert}$ such that $\Phi_i(P) =  \Phi_i(P'_C, P_{-C})$ for all $i\in C$, then $\Phi_s(P) =  \Phi_s(P'_C, P_{-C})$. Hence, local group non-bossiness ensures that no coalition of students assigned to the same school can change the assignment of some of their colleagues by reporting different preferences and without changing their school.\medskip

\begin{lemma}\label{afacan}
If $\Phi:{\mathcal{P}}\rightarrow {\mathcal{M}}$ is a locally non-bossy and strategy-proof mechanism, then it is locally group non-bossy.\medskip
\end{lemma}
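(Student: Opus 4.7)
The plan is to reduce group non-bossiness to iterated single-student local non-bossiness by using auxiliary preferences that force the common school $s$ to the top. Write $C=\{i_1,\ldots,i_r\}\subseteq \Phi_s(P)$ and assume $\Phi_i(P)=\Phi_i(P'_C,P_{-C})=s$ for every $i\in C$. For each $i\in C$, define $\widehat{P}_i$ to be the preference that ranks $s$ strictly above every other alternative and preserves the order $P_i$ on the remaining alternatives (if $s=s_0$ this is still a valid element of $\mathcal{L}$).

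The key single-step claim is: if $\Phi_i(Q_i,Q_{-i})=s$ for some profile $(Q_i,Q_{-i})\in\mathcal{P}$ and some $i\in N$, then $\Phi_i(\widehat{P}_i,Q_{-i})=s$ as well, and hence by local non-bossiness $\Phi_s(Q_i,Q_{-i})=\Phi_s(\widehat{P}_i,Q_{-i})$. The argument is a standard two-sided use of strategy-proofness: deviating $i$ from $Q_i$ to $\widehat{P}_i$ gives $s\,R^{Q_i}_i\,\Phi_i(\widehat{P}_i,Q_{-i})$, while deviating $i$ from $\widehat{P}_i$ to $Q_i$ gives $\Phi_i(\widehat{P}_i,Q_{-i})\,\widehat{R}_i\, s$; since $s$ is the $\widehat{P}_i$-top, the second inequality forces $\Phi_i(\widehat{P}_i,Q_{-i})=s$.

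I would then apply this single-step claim iteratively to the two profiles $P$ and $(P'_C,P_{-C})$. Starting from $P$, replace $P_{i_1}$ by $\widehat{P}_{i_1}$: the claim gives $\Phi_{i_1}(\widehat{P}_{i_1},P_{-i_1})=s$ and local non-bossiness yields $\Phi_s(P)=\Phi_s(\widehat{P}_{i_1},P_{-i_1})$. In particular every $i_k\in C$ is still assigned to $s$ at the new profile, so the claim applies again to $i_2$, and so on; after $r$ steps one obtains $\Phi_s(P)=\Phi_s(\widehat{P}_C,P_{-C})$. Running the identical iteration on $(P'_C,P_{-C})$, replacing each $P'_{i_k}$ by $\widehat{P}_{i_k}$ one at a time, yields $\Phi_s(P'_C,P_{-C})=\Phi_s(\widehat{P}_C,P_{-C})$. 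Combining the two chains gives $\Phi_s(P)=\Phi_s(P'_C,P_{-C})$, which is exactly local group non-bossiness.

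The only delicate step is the single-student claim, and the subtlety there is simply making sure that placing $s$ on top of $\widehat{P}_i$ is compatible with $s=s_0$ and with the domain $\mathcal{L}$; everything else is a routine bookkeeping of strategy-proofness inequalities together with repeated invocations of \autoref{loc_nb}. No cycle-based or combinatorial ingredient is needed beyond what \autoref{loc_nonb} already supplies.
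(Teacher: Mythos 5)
Your proof is correct and follows essentially the same route as the paper, which proves this lemma by rerunning the argument of \autoref{papai}: the same top-ranking auxiliary preferences $\widehat{P}_i$, the same two-sided strategy-proofness step to pin down the single-agent deviation, and the same one-at-a-time replacement propagated through \autoref{loc_nb}, meeting at the common profile $(\widehat{P}_C,P_{-C})$. The only cosmetic difference is that you chain set equalities of $\Phi_s$ directly, whereas the paper derives the inclusion $\Phi_s(P)\subseteq\Phi_s(P'_C,P_{-C})$ and then swaps the roles of the two profiles to obtain the reverse inclusion.
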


\begin{proof} 
Given $s\in S\cup \{s_0\}$, $P\in {\mathcal{P}}$, $C\subseteq \Phi_s(P)$, and $P'_{C}\in {\mathcal{L}}^{\vert C\vert}$, suppose that $\Phi_i(P) =  \Phi_i(P'_C, P_{-C})$ for all $i\in C$. Since $\Phi_i(P) =  \Phi_i(P'_C, P_{-C})$ ensures that $ \Phi_i(P'_C, P_{-C}) R_i \Phi_i(P)$, the arguments made in the proof of \autoref{papai} guarantee that $\Phi_j(P) =  \Phi_j(P'_C, P_{-C})$ for all $j\in \Phi_s(P)$. Hence, $\Phi_s(P)\subseteq \Phi_s(P'_C, P_{-C})$. Moreover, swapping the roles of $P$ and $(P'_C, P_{-C})$, we obtain $ \Phi_s(P'_C, P_{-C})\subseteq \Phi_s(P)$. Therefore, $\Phi_s(P)=\Phi_s(P'_C, P_{-C})$.
\end{proof}

\medskip

The converse of \autoref{afacan} does not hold. Indeed, consider the {\it{Boston mechanism}}, also known as the Immediate Acceptance mechanism. This mechanism runs similarly to DA, with the difference that at each step accepted students are definitively matched to schools. Although it is not strategy-proof (Abdulkadiro\u{g}lu and S\"{o}nmez, 2003), it is  locally group non-bossy. 
Formally, denoting by ${\mathcal{B}}:{\mathcal{P}}\rightarrow {\mathcal{M}}$ this mechanism, suppose that ${\mathcal{B}}_i(P) =  {\mathcal{B}}_i(P'_C, P_{-C})=s$ for all students in a coalition  $C\subseteq {\mathcal{B}}_s(P)$. Let $\widehat{P}_{C}=(\widehat{P}_i)_{i\in C}$ be such that $s$ is the most preferred alternative under each $\widehat{P}_i$. It is not difficult to verify that  ${\mathcal{B}}_s(P) =  {\mathcal{B}}_s(\widehat{P}_C, P_{-C})={\mathcal{B}}_s(P'_C, P_{-C})$. \medskip

In what follows, we present examples that allow us to relate local non-bossiness and other properties.\\ 

\begin{example}\label{ex1_app}{\it{A stable mechanism that is neither locally non-bossy nor strategy-proof}}.

Let $N=\{1,2,3\}$, $S=\{s_1,s_2\}$, $(q_{s_1},q_{s_2})=(2,1),$ $\succ_{s_1}:1,2,3$, and $\succ_{s_2}:3,2,1$.  

Let $\overline{P}=(\overline{P}_1,\overline{P}_2,\overline{P}_3)$ be such that:
\begin{flalign*}
\overline{P}_1: s_1, s_2, s_0, \quad\quad\quad
 \overline{P}_2: s_2, s_1, s_0, \quad\quad\quad
 \overline{P}_3: s_1, s_2, s_0.
\end{flalign*}

Since the {\it{school-optimal}} stable matching of $[N, S, \succ, q, \overline P]$ is $${\rm{DA}}^S(\overline{P})=((1,s_1),(2,s_1),(3,s_2)),$$ it differs from ${\rm{DA}}(\overline{P})=((1,s_1),(2,s_2),(3,s_1)).$ Let $\Omega:{\mathcal{P}}\rightarrow {\mathcal{M}}$ be the stable mechanism such that $\Omega(P)={\rm{DA}}(P)$ when $P\neq \overline{P}$, and $\Omega(\overline{P})={\rm{DA}}^S(\overline{P})$. 

We claim that $\Omega$ is locally bossy. Let $P_1$ be such that ${P}_1: s_2, s_1, s_0$. It is easy to see that $\Omega(P_1,\overline{P}_{-1})={\rm{DA}}(P_1,\overline{P}_{-1})=((1,s_1),(2,s_2),(3,s_1))$. Since $\Omega(\overline{P}_1,\overline{P}_{-1})={\rm{DA}}^S(\overline{P})$, when student $1$'s preferences change from $P_1$ to $\overline P_1$, she remains assigned to $s_1$, but her schoolmates change as $\Omega_{s_1}(\overline{P}_1, \overline{P}_{-1})=\{1,2\}\neq \{1,3\}= \Omega_{s_1}(P_1,\overline{P}_{-1}).$ 

Since DA is the only stable and strategy-proof mechanism (Alcalde and Barber\`a, 1994), it follows that $\Omega$ does not satisfy strategy-proofness. \hfill $\Box$  \\
\end{example}

Since Gale's Top Trading Cycles mechanism (Shapley and Scarf, 1974) is non-bossy and unstable, \autoref{ex1_app} implies that local non-bossiness and stability are independent properties. \\

Bando and Imamura (2016) introduce the following property: a mechanism $\Phi$ is {\bf{weakly non-bossy}} when for every $P\in {\mathcal{P}}$, $i\in N$, and $P'_i\in {\mathcal{L}}$, $\Phi_i(P) =  \Phi_i(P'_i, P_{-i})$ implies that $\Phi_{s_0}(P) =  \Phi_{s_0}(P'_i, P_{-i})$. Hence, when a mechanism is weakly non-bossy, no student can modify the set of {\it{unassigned students}} without changing her own school. Since DA satisfies weak non-bossiness (Bando and Imamura, 2016; Theorem 3), it is interesting to compare it with local non-bossiness.\\ 

\begin{lemma}\label{BaIma}
Weak non-bossiness is independent of both local non-bossiness and strategy-proofness.   
\end{lemma}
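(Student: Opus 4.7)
Since Lemma~\ref{BaIma} asserts two separate independences, the plan is to produce four separating mechanisms: (i) weakly non-bossy but locally bossy, (ii) locally non-bossy but not weakly non-bossy, (iii) weakly non-bossy but not strategy-proof, and (iv) strategy-proof but not weakly non-bossy. Two of these I can read off the paper's existing machinery, while the other two call for small tailored constructions.

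For (i), I would reuse the mechanism $\Omega$ of Example~\ref{ex1_app}, which was already shown there to be locally bossy. To verify WN, the key observation is that in that setting the total capacity equals $|N|=3$ and every school is acceptable under $\overline{P}$, so no student is unassigned under either ${\rm{DA}}(\overline{P})$ or ${\rm{DA}}^S(\overline{P})$; combined with the fact that $\Omega$ and ${\rm{DA}}$ agree off $\overline{P}$, this gives $\Omega_{s_0}(P)={\rm{DA}}_{s_0}(P)$ for every profile $P$, and WN of ${\rm{DA}}$ (Bando and Imamura, 2016, Theorem~3) transfers to $\Omega$. For (iii), I would let $\sigma$ reverse each student's preference order coordinatewise and set $\Phi={\rm{DA}}\circ\sigma$; then $\Phi$ delivers each student her \emph{least}-preferred stable outcome under her true preferences, and any student whose ${\rm{DA}}(P)$ differs from ${\rm{DA}}(\sigma(P))$ strictly benefits by reporting $\sigma(P_i)$, ruling out SP. WN transfers from ${\rm{DA}}$ by a change of variables: $\Phi_i(P)=\Phi_i(P'_i,P_{-i})$ translates to the analogous equality for ${\rm{DA}}$ on the $\sigma$-reversed profiles, and WN of ${\rm{DA}}$ then yields $\Phi_{s_0}(P)=\Phi_{s_0}(P'_i,P_{-i})$.

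For (ii), I would build a small custom mechanism $\Psi$ on $N=\{1,2,3\}$, $S=\{s_1,s_2\}$, $q_{s_1}=q_{s_2}=1$ that depends only on $P_1$. Student~$1$ is always sent to her top choice, while the remaining real school and $s_0$ are handed to students $2$ and $3$ according to which of these two alternatives $P_1$ ranks next: for instance, $P_1=(s_1,s_2,s_0)$ yields $((1,s_1),(2,s_2),(3,s_0))$, whereas $P_1=(s_1,s_0,s_2)$ yields $((1,s_1),(2,s_0),(3,s_2))$, and the four remaining orderings of $P_1$ are handled in the same symmetric fashion. Since $\Psi$ ignores $P_2$ and $P_3$, the LN requirement is vacuous for students~$2$ and $3$, and it holds for student~$1$ because any change in $P_1$ preserving her top choice leaves her assigned school's enrollment equal to the singleton $\{1\}$. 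WN fails on the displayed pair: student~$1$ remains at $s_1$ but $\Psi_{s_0}$ switches from $\{3\}$ to $\{2\}$. For (iv), I would take the minimal example $N=\{1,2\}$, $S=\{s_1\}$, $q_{s_1}=1$, and let $\Gamma$ output $((1,s_1),(2,s_0))$ when $(P_1,P_2)=((s_1,s_0),(s_0,s_1))$, $((1,s_0),(2,s_1))$ when $(P_1,P_2)=((s_0,s_1),(s_1,s_0))$, and $((1,s_0),(2,s_0))$ in every other case. A direct enumeration of unilateral deviations shows that no student can strictly improve her own assignment by misreporting, so $\Gamma$ is SP; but fixing $P_2=(s_1,s_0)$ and toggling $P_1$ between $(s_0,s_1)$ and $(s_1,s_0)$ keeps student~$1$ at $s_0$ while shifting $\Gamma_{s_0}$ from $\{1\}$ to $\{1,2\}$, breaking WN.

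The delicate step is (ii): one must engineer a WN failure while preserving LN globally, and generic bossy constructions tend to violate LN as a side effect. I would resolve this by confining all dependence of $\Psi$ to a single agent's preferences, which renders the LN condition vacuous for every other agent and collapses the condition for the ``designed'' agent to the obvious fact that her chosen school always carries the singleton enrollment $\{1\}$.
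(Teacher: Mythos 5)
Your constructions are all essentially sound, but your decomposition differs from the paper's and, as stated, leaves a logical gap. The paper proves the lemma with exactly two mechanisms: the stable mechanism of Example \ref{ex1_app}, which is weakly non-bossy but \emph{neither} locally non-bossy \emph{nor} strategy-proof, and a second mechanism on $N=\{1,2,3\}$, $S=\{s_1,s_2\}$, $(q_{s_1},q_{s_2})=(2,1)$ that is locally non-bossy \emph{and} strategy-proof but not weakly non-bossy. The second of these matters: the sentence immediately after the lemma, and the crossed arrow in Figure \ref{diag}, assert that the \emph{combination} of local non-bossiness and strategy-proofness does not imply weak non-bossiness. Your four pairwise separations do not deliver this conjunction direction on their own --- one mechanism that is LN-but-not-WN and another that is SP-but-not-WN leave open the possibility that every mechanism with \emph{both} properties is weakly non-bossy. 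Fortunately, your $\Psi$ from item (ii) closes the gap by accident: it is a dictatorship for student $1$ and constant in $P_2,P_3$, hence strategy-proof as well as locally non-bossy, so it alone witnesses the conjunction direction. You should say so explicitly, and you should also pin down $\Psi$ on the two orderings of $P_1$ that place $s_0$ on top (e.g.\ always $2\mapsto s_1$, $3\mapsto s_2$), since local non-bossiness at $s=s_0$ must be checked there. Once that is done, items (iii) and (iv) are correct but redundant for the paper's purposes.

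Two smaller points. In (i), the phrase ``WN of ${\rm{DA}}$ transfers to $\Omega$'' is not literal: the hypothesis of weak non-bossiness for $\Omega$ refers to $\Omega_i$, which differs from ${\rm{DA}}_i$ at the profile $\overline{P}$, so the Bando--Imamura theorem cannot be invoked directly; the verification does go through because $\vert N\vert=\sum_s q_s$ and all schools are acceptable at $\overline{P}$ and at any unilateral deviation that keeps the deviator seated, so $\Omega_{s_0}$ is empty on every relevant profile. In (iii), reporting $\sigma(P_i)$ yields ${\rm{DA}}_i(P_i,\sigma(P_{-i}))$, which by strategy-proofness of ${\rm{DA}}$ is only \emph{weakly} better under $P_i$ than $\Phi_i(P)$; refuting strategy-proofness of ${\rm{DA}}\circ\sigma$ still requires exhibiting one profile with a strict gain, which is easy but does not follow merely from ``${\rm{DA}}(P)$ differs from ${\rm{DA}}(\sigma(P))$.''
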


\begin{proof}
On the one hand, it is not difficult to verify that the mechanism described in \autoref{ex1_app} is weakly non-bossy. However, this mechanism is neither locally non-bossy nor strategy-proof. On the other hand, if $N=\{1,2,3\}$, $S=\{s_1,s_2\}$, and $(q_{s_1},q_{s_2})=(2,1),$  regardless of the schools' priority profile $(\succ_{s_1}, \succ_{s_2})$, the mechanism $\Omega:{\mathcal{P}}\rightarrow {\mathcal{M}}$ defined by
$$\Omega(P)=
\left\{
\begin{array}{ll}
((1,s_1),(2,s_1),(3,s_2)), &\quad\quad \mbox{when}\, \, s_2 P_1 s_1,\\
((1,s_1),(2,s_1),(3,s_0)), &\quad\quad \mbox{otherwise},
\end{array}
\right.
$$
 is locally non-bossy and strategy-proof but does not satisfy weak non-bossiness.
 \end{proof}

 \medskip

In one-to-one matching problems, local non-bossiness only requires that a student without a school cannot modify the set of unassigned students without getting a place elsewhere. Thus, weak non-bossiness is stronger than local non-bossiness in this context. Furthermore, although every group strategy-proof mechanism is weakly non-bossy, it follows from the proof of \autoref{BaIma} that the combination of local non-bossiness and strategy-proofness does not imply weak non-bossiness.\medskip

Group strategy-proofness is stronger than group non-bossiness (Afacan, 2012). However, the following example shows that the analogous result does not hold for the local versions of these concepts. \\

\begin{example}\label{ex2_app}{\it{A locally group strategy-proof mechanism that is locally bossy and weakly bossy}}.

Let $N=\{1,2,3\}$, $S=\{s_1,s_2\}$, $(q_{s_1},q_{s_2})=(2,1),$ $\succ_{s_1}:1, 2,3$, and $\succ_{s_2}:3, 1,2.$ 

Let $\Omega:{\mathcal{P}}\rightarrow {\mathcal{M}}$ be such that $\Omega(P)={\rm{DA}}(P)$ unless the most preferred school of student $1$ is $s_1$. In this case, let $\Omega_1(P)={s_1}$ and $\Omega_i(P)=s_0$ for all $i\neq 1$. Notice that, since $1$ has the highest priority at $s_1$, $\Omega_1(P) R_1 s_1$ for all $P\in {\mathcal{P}}$.

The mechanism $\Omega$ is strategy-proof, because ${\rm{DA}}$ satisfies this property and no one can prevent $1$ to receive a seat in $s_1$ when it is her most preferred alternative. Moreover, $\Omega$ is locally group strategy-proof. Indeed, if this were not the case, there would exist $P, P'\in {\mathcal{P}}$ and two students in $\Omega_{s_1}(P)=\{i,j\}$ such that if they report $(P_i', P_j')$ none of them would be worse off, and at least one of them strictly better off. Note that $s_1$ is not the most preferred school for student $1$ under either $P_1$ (because in this case only one student is  assigned to $s_1$) or $P'_1$ (because otherwise at least one of the students $\{i,j\}$ is not assigned under $(P'_i,P'_j, P_k)$ and her situation worsens). Thus, $\Omega$ coincides with ${\rm{DA}}$ in the preference profiles $P$ and $(P'_i,P'_j, P_k)$. This contradicts the local group strategy-proofness of ${\rm{DA}}$ (see \autoref{loc_nonb} and \autoref{papai}).

However, $\Omega$ is locally bossy. Consider the preference profile $P=(P_1,P_2,P_3)$:
\begin{flalign*}
P_1: s_2, s_1, s_0,\quad\quad\quad
P_2: s_1, s_2, s_0, \quad\quad\quad
P_3: s_2, s_1, s_0.
\end{flalign*}

We have that $\Omega(P)=((1,s_1),(2,s_1),(3,s_2))$ and $\Omega_{s_1}(P)=\{1,2\}$. If student $1$ changes her preference to $P'_1$ such that $P'_1:s_1, s_2,s_0$, she remains assigned to $s_1$ but is left without schoolmates because $\Omega_{s_1}(P'_1,P_2,P_3)=\{1\}$. This argument also shows that $\Omega$ is weakly bossy, because $\Omega_{s_0}(P)=\emptyset$ and $\Omega_{s_0}(P'_1,P_2,P_3)=\{2,3\}$. \hfill $\Box$\\
\end{example}

Although the school-optimal stable mechanism is not strategy-proof, Afacan and Dur (2017) show that it is non-bossy. Hence, \autoref{ex2_app} implies that local non-bossiness and strategy-proofness are independent properties.\\

The next two examples show that local group non-bossiness is stronger than local non-bossiness, and that local group strategy-proofness is stronger that strategy-proofness.\\

\begin{example}\label{ex3_app}{\it{A locally non-bossy mechanism that is not locally group non-bossy}}.

Let $N=\{1,2,3\}$, $S=\{s_1,s_2\}$, $(q_{s_1},q_{s_2})=(3,1),$ $\succ_{s_1}:1,2,3$, and $\succ_{s_2}: 3,2,1$. 

Denote by $top(P_i)$ the most preferred alternative of a student $i$ when her preferences are $P_i\in {\mathcal{L}}$. Let $\Omega:{\mathcal{P}}\rightarrow {\mathcal{M}}$ be the mechanism such that: $$\Omega(P)=
\left\{
\begin{array}{ll}
((1,s_1),(2,s_1),(3,s_2)), &\quad\quad \mbox{when}\, \, top(P_1)=top(P_2)=s_1;\\
((1,s_1),(2,s_1),(3,s_1)), &\quad\quad \mbox{when}\,\,  top(P_1)=top(P_2)=s_2;\\
((1,s_0),(2,s_0),(3,s_0)), &\quad\quad \mbox{otherwise}. 
\end{array}
\right.
$$

It is not difficult to verify that $\Omega$ is locally non-bossy. Indeed, $1$ and $2$ are the only two students who might change their schoolmates without changing schools. However, the definition of $\Omega$ ensures that, if $1$ or $2$ maintains her assignment after changing her preferences, then the whole matching does not change.

Moreover, $\Omega$ is locally group bossy. Let $P=(P_1,P_2,P_3)$ be a preference profile such that $top(P_1)=top(P_2)=s_1$ and consider $P'_1, P'_2 \in {\mathcal{L}}$ such that $top(P'_1)=top(P'_2)=s_2$. Notice that, although $\Omega_i(P)=\Omega_i(P'_1,P'_2,P_3)=s_1$ for each student $i\in \{1,2\}$, we have that $\Omega_{s_1}(P)=\{1,2\}\neq\{1,2,3\}=\Omega_{s_1}(P'_1,P'_2,P_3)$.\hfill $\Box$  \\
\end{example}

\begin{example}\label{ex4_app}{\it{A strategy-proof mechanism that is not locally group strategy-proof}}.

Let $N=\{1,2,3\}$, $S=\{s_1,s_2\}$, $(q_{s_1},q_{s_2})=(2,2),$ $\succ_{s_1}:1,2,3$, and $\succ_{s_2}:3,2,1$.  

Denote by $top_k(P_i)$ the $k$-th most preferred alternative of a student $i$ under $P_i\in {\mathcal{L}}$. Let $\Omega:{\mathcal{P}}\rightarrow {\mathcal{M}}$ be the mechanism such that:
$$\Omega(P)=
\left\{
\begin{array}{ll}
((1,top_1(P_1)),(2,s_1),(3,s_2)), &\quad\quad \mbox{when}\, \, top_2(P_1)=s_2;\\
((1,top_1(P_1)),(2,s_2),(3,s_1)), &\quad\quad \mbox{otherwise}.
\end{array}
\right.
$$

It is easy to verify that $\Omega$ is strategy-proof. We claim that $\Omega$ is not locally group strategy-proof. Let $P=(P_1,P_2,P_3)$ be a preference profile such that $P_1:s_1, s_2,s_0$, $P_2:s_2, s_1, s_0$, and $P_3: s_1,s_2,s_0$. Let $P'_1, P'_2 \in {\mathcal{L}}$ be such that $P'_1: s_1,s_0, s_2$ and $P'_2=P_2$. It follows that the coalition of schoolmates $\{1,2\}\in \Omega_{s_1}(P)$ can manipulate the mechanism $\Omega$ at $P$, because $\Omega_1(P)=s_1=\Omega_1(P'_1,P'_2,P_3)$ and $\Omega_2(P'_1,P'_2,P_3)=s_2 P_2 s_1=\Omega_2(P)$.\hfill $\Box$ \\
\end{example}

\section{On local non-bossiness under non-responsive preferences}\label{extension}\medskip

Given a set $N$ of students and a set $S$ of schools, consider a context in which every school $s$ has a complete, transitive, and strict preference relation $P_s$ defined on $2^N$ (the collection of subsets of $N$). Let $C^s:2^N \rightarrow 2^N$ be the mapping that associates to each $N'\subseteq N$ the most preferred subset of $N'$ under $P_s$. We refer to $C^s$ as the {\bf{choice function}} induced by $P_s$.

In this context, when students have preferences $(P_i)_{i\in N}\in {\mathcal{P}}$, a matching $\mu$ is stable as long as the following properties hold:
\begin{itemize}
\item[-] {\it{Individual rationality.}} For all $i\in N$ and $s\in S$, $\mu(i) R_i s_0$ and $C^s(\mu(s))=\mu(s)$.  
\item[-] {\it{No blocking.}} There is no $(i,s)\in N\times S$ such that $sP_i \mu(i)$ and $i\in C^s(\mu(s)\cup\{i\})$.
\end{itemize}

Notice that, throughout the paper, we implicitly assume that schools have {\bf{responsive preferences}} (Roth and Sotomayor, 1990), as the relevant information about how schools choose students from a pool of applicants is given by priority orders $(\succ_s)_{s\in S}$ and capacities $(q_s)_{s\in S}$. More precisely, when schools' preferences are responsive, for each $s\in S$ and $N'\in 2^N$, $C^s(N')$ is composed of the $\min\{\vert N'\vert, q_s\}$ students in $N'$ with the highest priority under $\succ_s$. 

Without responsiveness, the existence of a stable matching can be compromised. To avoid this possibility and to ensure that some classical properties of ${\rm{DA}}$ continue to hold,\footnote{The mechanism ${\rm{DA}}$ can be naturally adapted to more general contexts by assuming that each school uses its choice function to evaluate proposals when the student-proposing deferred acceptance algorithm is implemented.}  it is usually assumed that every $P_s$ satisfies the following conditions:
\begin{itemize}
\item {\bf{Substitutability}}: Given $N', N'' \in 2^N$ with $N'\subseteq N''$,  $$[\, i\in N',\,\,\,\quad i\in C^s(N'')\,]\quad\quad\Longrightarrow \quad\quad i\in C^s(N').$$ 
\item {\bf{Law of aggregate demand:}} Given $N', N''\in 2^N$ with $N'\subseteq N''$, $\vert C^s(N')\vert \leq \vert C^s(N'')\vert.$ \medskip
\end{itemize}

Under substitutability, if a student is chosen from a group of candidates, she will still be chosen from any subgroup of candidates. On the other hand, the law of aggregate demand ensures that the number of students chosen does not shrink when the pool of candidates expands. In every school choice context $[N,S,(P_s)_{s\in S}]$ in which $(P_s)_{s\in S}$ are substitutable and satisfy the law of aggregate demand, ${\rm{DA}}$ is the student-optimal stable mechanism and is strategy-proof (cf., Roth, 1984a; Hatfield and Milgrom, 2005).

It is natural to ask whether ${\rm{DA}}$ is locally non-bossy in this more general framework. Unfortunately, as the following example illustrates,  the local non-bossiness of ${\rm{DA}}$ cannot be ensured when we only know that schools' preferences are substitutable and satisfy the law of aggregate demand, even when only one school has non-responsive preferences.\\

\begin{example}\label{contraejemplo} Let $[N,S, (P_{s})_{s\in S}]$ be such that $N=\{1,2,3,4\}$, $S=\{s_1,s_2,s_3\}$,  
$$P_{s_1}: \{1,2\}, \{1,3\}, \{1,4\}, \{2,3\}, \{2,4\},\{1\},\{2\},\{3\}, {\underline{\{3,4\}}}, \{4\},\emptyset,\ldots,$$ 
and assume that $P_{s_2}$ and $P_{s_3}$ are responsive and compatible with the following priority orders and capacities: $\succ_{s_2}: 3,4, 1,2$, $\,\succ_{s_3}: 2,3,1,4$, and $q_{s_2}=q_{s_3}=1$. Notice that $P_{s_1}$ is substitutable and satisfies the law of aggregate demand. 

Consider the preference profile $P\in {\mathcal{P}}$ such that
\begin{eqnarray*}
P_1: s_2,s_1,s_3,s_0,  \quad\quad\,\,
P_2: s_2,s_3,s_1,s_0 \quad \quad \,\,
P_3: s_3,s_1, s_2,s_0\quad \quad\,\,
P_4: s_1, s_2,s_3,s_0  
\end{eqnarray*}
and let $P'_1: s_1,s_2, s_3, s_0$. It follows that 
\begin{eqnarray*}
\quad {\rm{DA}}(P) &=& ((1,s_1), (2,s_3), (3,s_1), (4,s_2)),\\
{\rm{DA}}(P'_1, P_{-1}) &=& ((1,s_1), (2,s_2), (3,s_3), (4,s_1)).
\end{eqnarray*}

Since ${\rm{DA}}_1(P)={\rm{DA}}_1(P'_1, P_{-1})$ and ${\rm{DA}}_{s_1}(P)\neq {\rm{DA}}_{s_1}(P'_1, P_{-1})$,   ${\rm{DA}}$ is locally bossy.\footnote{By raising the position of $\{3,4\}$ in $P_{s_1}$ we can ensure that ${\rm{DA}}$ is locally non-bossy. Indeed, the preference relation $$ \widehat{P}_{s_1}: \{1,2\}, \{1,3\}, \{1,4\}, \{2,3\}, \{2,4\},{\underline{\{3,4\}}}, \{1\},\{2\},\{3\}, \{4\},\emptyset,\ldots,$$ is responsive and compatible with the priority order $\succ_{s_1}:1,2,3,4$ and the capacity $q_{s_1}=2$. Hence, \autoref{loc_nonb} ensures that ${\rm{DA}}$ is locally non-bossy in the school choice context $[N,S, (\widehat{P}_{s_1}, P_{s_2}, P_{s_3})]$.}  \hfill $\Box$\\
\end{example}

To prove the local non-bossiness of ${\rm{DA}}$ (\autoref{loc_nonb}), we use the fact that any improvement cycle from ${\rm{DA}}$ must be blocked by some student. This property holds when schools have responsive preferences and consider all students admissible, because ${\rm{DA}}$ is the student-optimal stable mechanism and the reassignments induced by an improvement cycle never compromise individual rationality. However, this argument may fail in more general contexts.

For instance, in \autoref{contraejemplo}, although $(1,4)$ is an improvement cycle from ${\rm{DA}}(P)$ that is not blocked by any student, its existence does not contradict the optimality of ${\rm{DA}}$, because $C^{s_1}(\{3,4\})=\{3\}\neq \{3,4\}$ and, therefore, the matching obtained when it is implemented, $\eta=((1,s_2), (2,s_3), (3,s_1), (4,s_1))$, is not individually rational.

Therefore, we might think that the local bossiness of ${\rm{DA}}$ in \autoref{contraejemplo}  is driven by the fact that there is no $q_{s_1}\in {\mathbb{N}}$ such that $\vert C^{s_1}(N')\vert =\min\{q_{s_1}, \vert N'\vert \}$ for all $N'\in 2^N$. Indeed, $C^{s_1}(\{1,2\})=\{1,2\}$ and $C^{s_1}(\{3,4\})=\{3\}$ are incompatible with this property. To avoid this possibility, assume that every $P_s$ satisfies the following assumption: 
\begin{itemize}
\item {\bf{$\boldsymbol{q}$-acceptance.}} There exists $q_s\in {\mathbb{N}}$ such that $\vert C^s(N')\vert=\min\{q_s,\vert N'\vert\}$ for all $N'\in 2^N$.
\end{itemize}

Notice that $q$-acceptance is stronger than the law of aggregate demand. As the next example shows, even when schools preferences are substitutable and $q$-acceptant, ${\rm{DA}}$ may not be locally non-bossy.\\

\begin{example}\label{contraejemplo2} Let $[N,S, (P_{s})_{s\in S}]$ be such that $N=\{1,2,3,4,5\}$, $S=\{s_1,s_2,s_3,s_4\}$, and 
$$P_{s_1}: \{1,2\}, \{1,3\}, {\underline{\{1,5\}}}, {\underline{\{1,4\}}}, \{2,3\}, \{2,4\}, \{2,5\}, \{3,4\}, \{3,5\}, \{4,5\}, \{1\},\ldots,\{5\},\emptyset,\ldots,$$ 
Moreover, $P_{s_2}$, $P_{s_3}$, and $P_{s_4}$ are responsive preferences compatible with the following priority orders and capacities: $$\succ_{s_2}: 3,4,5, 1,2,\quad\quad\succ_{s_3}: 2,3,1,4,5,\quad\quad \succ_{s_4}: 1,2,3,4,5,\quad\quad q_{s_2}=q_{s_3}=q_{s_4}=1.$$ 

Notice that $P_{s_1}$ is substitutable and $q$-acceptant, with $q_{s_1}=2$. 

Consider the preference profile $P\in {\mathcal{P}}$ such that
$$
\begin{array}{l}
P_1: s_2,s_1,s_3,s_4,s_0,\\
P_2: s_2,s_3,s_1,s_4,s_0, \\
P_3: s_3,s_1, s_2,s_4,s_0,
\end{array} \quad\quad\quad\quad\quad
\begin{array}{l}
P_4: s_1, s_4,s_3,s_2,s_0, \\ 
P_5: s_1, s_2, s_4,s_3,s_0.\\
\,\,
\end{array}
$$
and let $P'_1: s_1,s_2, s_3, s_4, s_0$. It follows that 
\begin{eqnarray*}
\quad {\rm{DA}}(P) &=& ((1,s_1), (2,s_3), (3,s_1), (4,s_4), (5,s_2)),\\
{\rm{DA}}(P'_1, P_{-1}) &=& ((1,s_1), (2,s_2), (3,s_3), (4,s_4), (5,s_1)).
\end{eqnarray*}

Since ${\rm{DA}}_1(P)={\rm{DA}}_1(P'_1, P_{-1})$ and ${\rm{DA}}_{s_1}(P)\neq {\rm{DA}}_{s_1}(P'_1, P_{-1})$,   ${\rm{DA}}$ is locally bossy.\footnote{By swapping the positions of $\{1,5\}$ and $\{1,4\}$ in $P_{s_1}$ we can ensure that ${\rm{DA}}$ is locally non-bossy. Indeed, the preference relation $$ \widetilde{P}_{s_1}:\{1,2\}, \{1,3\}, {\underline{\{1,4\}}}, {\underline{\{1,5\}}}, \{2,3\}, \{2,4\}, \{2,5\}, \{3,4\}, \{3,5\}, \{4,5\}, \{1\},\ldots,\{5\},\emptyset,\ldots,$$ is responsive and compatible with the priority order $\succ_{s_1}:1,2,3,4,5$ and the capacity $q_{s_1}=2$. Hence, \autoref{loc_nonb} ensures that ${\rm{DA}}$ is locally non-bossy in the school choice context $[N,S, (\widetilde{P}_{s_1}, P_{s_2}, P_{s_3})]$.}  \hfill $\Box$\\
\end{example}

In the proof of \autoref{loc_nonb} (Case I) we use the fact that no student who keeps her preferences and assignment when moving from $\mu\equiv {\rm{DA}}(P)$ to $\mu'\equiv {\rm{DA}}(P'_1, P_{-1})$ blocks a $\mu$-improvement cycle.\footnote{A student $i$ {\bf{blocks a $\boldsymbol{\mu}$-improvement cycle}} if she blocks the matching $\eta$ obtained when this cycle is implemented from $\mu$. That is, when there exists $s\in S$ such that $sP_i \eta(i)$ and $i\in C^s(\eta(s)\cup\{i\})$.}  Indeed, when schools have responsive preferences, such a block is incompatible with the stability of $\mu'$ (see Footnote 13). \autoref{contraejemplo2} shows that this incompatibility disappears in more general contexts: student $4$ blocks the $\mu$-improvement cycle $(1,5)$, but she does not block $\mu'$. The former property is a consequence of $C^{s_1}(\{3,4,5\})=\{3,4\}$, while the latter follows from $C^{s_1}(\{1,4,5\})=\{1,5\}$. \\

\section{proof of theorem 2}\label{appT2} \medskip

In this appendix, we present the proof of \autoref{coro-bando}. First, we formally define the concept of a choice function and show how one can be derived for each school from a mechanism $\Phi$. Second, we use the results of Chambers and Yenmez (2018) to show that, for every mechanism satisfying the axioms of \autoref{coro-bando}, the choice function derived for a school $s$ is compatible with a choice based on a priority order $\succ_s$. Finally, we use these priority orders $\succ=(\succ_s)_{s\in S}$ to guarantee that $\Phi = {\rm DA}^\succ$.\\

A {\bf{choice function}} is a mapping $C:{\mathcal{N}}\rightarrow {\mathcal{N}}$ such that, for each $N\in {\mathcal{N}}$, $C(N)$ is a {\it{non-empty}} subset of $N$.\footnote{Since schools' preferences are represented by priority orders that consider all students admissible, requiring a choice function that is defined on ${\mathcal{N}}$ to have  non-empty values does not limit generality.} 
Given a choice function $C$, consider the following properties:
\begin{itemize}
\item $C$ is {\bf{${\boldsymbol{q}}$-responsive}} when there is a priority order $\succ$ defined on $\overline{N}$ such that:
\begin{itemize}
\item For every $N\in {\mathcal{N}}$ with $\vert N\vert > q$, $C(N)$ is the set of $q$ students in $N$ with highest priority under $\succ$.
\item For every $N\in {\mathcal{N}}$ with $\vert N\vert \leq q$, $C(N)=N$.
\end{itemize}
\item $C$ is {\bf{substitutable}} when for every $N, N'\in {\mathcal{N}}$ with $N\subseteq N'$, $$[\,i\in N, \quad i\in C(N')\, ]\quad\quad\Longrightarrow \quad\quad i\in C(N).$$ 
\item $C$ satisfies the $\boldsymbol{(q+1)}${\bf{-weaker axiom of revealed preference}} ({\bf{$\boldsymbol{(q+1)}$-WrARP}}) when for every $i,j\in \overline{N}$ and $N,N'\in {\mathcal{N}}$ such that $i,j\in N\cap N'$ and $\vert N\vert = \vert N' \vert =q+1$, $$[\,i\in C(N),\quad j\in C(N')\setminus C(N)\,]\quad\quad\Longrightarrow\quad\quad i \in C(N').$$
\item $C$ is {\bf{$\boldsymbol{q}$-acceptant}} when for every $N\in {\mathcal{N}}$, $\vert C(N)\vert= \min\{q, \vert N\vert\}.$\\
\end{itemize}

{\bf{Throughout this appendix, for each $\boldsymbol{s\in S}$, let $\boldsymbol{P^s}\in {\mathcal{L}}$ be a preference relation such that $\boldsymbol{s}$ is the only admissible school.}} 

Given a mechanism $\Phi:{\mathcal{N}}\times \mathcal{P}\rightarrow {\mathcal{M}}^*$, denote by $C^{\Phi,s}:{\mathcal{N}}\rightarrow {\mathcal{N}}$ the function defined by $$C^{\Phi,s}(N)=\Phi_s(N,(P^s,\ldots, P^s)),\quad\quad\forall N\in {\mathcal{N}}.$$ 

Note that, although there are many ways to choose the preference relation $P^s$, the fact that for every $P\in {\mathcal{P}}$ the matching $\Phi(N,P)$ depends only on $[N,S,q,(P^t_i)_{i\in N}]$ guarantees that the mapping $C^{\Phi,s}$ does not depend on how schools different from $s$ are ordered in $P^s$.\\

\begin{lemma}\label{NUEVO}
Given an individual rational and weakly non-wasteful mechanism $\Phi:{\mathcal{N}}\times \mathcal{P}\rightarrow {\mathcal{M}}^*$, the following properties hold for each $s\in S$:
\begin{itemize}
\item[(i)] $C^{\Phi,s}$ is a $q_s$-acceptant choice function.
\item[(ii)] $C^{\Phi,s}$ is substitutable whenever $\Phi$ is population-monotonic.
\item[(iii)] $C^{\Phi,s}$ satisfies $(q_s+1)$-WrARP for all $s\in S$ if and only if $\Phi$ satisfies S-WrARP.\\
\end{itemize} 
\end{lemma}

\begin{proof} {\bf{(i)}} Since $s$ is the only admissible school under $P^s$, the individual rationality and weak non-wastefulness of $\Phi$ ensures that, for every $N\in {\mathcal{N}}$, the following properties hold:
\begin{itemize}
\item If $\vert N \vert \leq  q_s$, then $\Phi_s(N, (P^s,\ldots, P^s))=N$.
\item If $\vert N \vert >  q_s$, then  $\vert \Phi_s(N, (P^s,\ldots, P^s))\vert = q_s$.
\end{itemize}
Hence, $\Phi_s(N, (P^s,\ldots, P^s))$ is non-empty and $\vert \Phi_s(N, (P^s,\ldots, P^s))\vert=\min\{q_s, \vert N\vert\}$. Therefore, it follows from the definition of $C^{\Phi,s}$ that it is a $q_s$-acceptant choice function, as $C^{\Phi,s}(N)\neq \emptyset$ and $\vert C^{\Phi,s}(N)\vert=\min\{q_s, \vert N\vert\}$ for all $N\in {\mathcal{N}}$. \medskip

\noindent {\bf{(ii)}} Let $N,N'\in {\mathcal{N}}$ be such that $N\subseteq N'$ and $i\in C^{\Phi,s}(N')\cap N$. Then, $i\in \Phi_s(N', (P^s,\ldots, P^s))$ and $i\in N$. Since $s$ is the only admissible school under $P^s$, the population-monotonicity of $\Phi$ implies that $i\in \Phi_s(N, (P^s,\ldots, P^s))$. That is, $i\in C^{\Phi,s}(N)$. We conclude that $C^{\Phi,s}$ is substitutable.\medskip

\noindent {\bf{(iii)}} Let $N\in {\mathcal{N}}$, $P,\tilde{P}\in {\mathcal{P}}$, and $s\in S$ be such that $A(P_i)=A(\tilde{P}_i)=\{s\}$ for all $i\in N$. From the definition of mechanism, we know that $\Phi_s(N,P)$ depends only on $[N,S, q, (P^t_i)_{i\in N}]$ and $\Phi_s(N,\tilde{P})$ depends only on $[N,S, q, (\tilde{P}^t_i)_{i\in N}]$, where $P^t_i$ and $\tilde{P}^t_i$ are the truncations of $P_i$ and $\tilde{P}_i$ at $s_0$, respectively. Since $(P^t_i)_{i\in N}=(\tilde{P}^t_i)_{i\in N}$, it follows that $\Phi_s(N,P)=\Phi_s(N,\tilde{P})$. Therefore, it is immediate from the definitions of $(q_s+1)$-WrARP and S-WrARP that the choice function $C^{\Phi,s}$ satisfies $(q_s+1)$-WrARP for all $s\in S$ if and only if $\Phi$ satisfies S-WrARP.\end{proof}

$\,$

The next result is an immediate consequence of Theorem 2 of Chambers and Yenmez (2018) and \autoref{NUEVO}  above:\\

\begin{lemma}\label{NUEVO-1}
If a mechanism $\Phi:{\mathcal{N}}\times \mathcal{P}\rightarrow {\mathcal{M}}^*$ is individually rational, weakly non-wasteful, population-monotonic, and satisfies S-WrARP, then $C^{\Phi,s}$ is $q_s$-responsive for all $s\in S$.\\
\end{lemma}

The following result shows that, when an individually rational and strategy-proof mechanism $\Phi$ is implemented in a population $N\subseteq {\overline{N}}$, if a student $i$ is assigned to a school $s$ under $P$, then she continues to be assigned to $s$ when she reports it as her only admissible alternative. Furthermore, if $s$ is unattainable for $i$ under $P$, in the sense that $s P_i \Phi_i(N,P)$, then it is unattainable for her even when she reports $s$ as her only admissible school.\\

\begin{lemma}\label{NUEVO-2}
Let $\Phi:{\mathcal{N}}\times \mathcal{P}\rightarrow {\mathcal{M}}^*$ be an individually rational and strategy-proof mechanism. For every $N\in {\mathcal{N}}$, $P\in {\mathcal{P}}$, $s\in S$, and $i\in N$ the following properties hold:
\begin{itemize}
\item[(i)] If $\Phi_i(N,P)=s$, then $\Phi_i(N, (P^s, P_{-i}))=s$.
\item[(ii)] If $s P_i \Phi_i(N,P)$, then $\Phi_i(N, (P^s, P_{-i}))=s_0$.\\
\end{itemize}
\end{lemma}

\begin{proof} {\bf{(i)}} Suppose that $\Phi_i(N,P)=s$. Note that the definition of $P^s$ and the individual rationality of $\Phi$ imply that $\Phi_i(N,(P^s,P_{-i})) \in \{s,s_0\}$. Moreover, if $\Phi_i(N,(P^s, P_{-i}))=s_0$, then $\Phi_i(N,P)=s \ P^s \ s_0= \Phi_i(N,(P^s,P_{-i}))$, which contradicts the strategy-proofness of $\Phi$. Therefore, $\Phi_i(N,(P^s, P_{-i}))=s$. \medskip

\noindent {\bf{(ii)}} Suppose that  $s P_i \Phi_i(N,P)$. If $\Phi_i(N,(P^s, P_{-i})) = s$, then   $\Phi_i(N,(P^s,P_{-i})) \ P_i \ \Phi_i(N,P)$, which contradicts the strategy-proofness of $\Phi$. Thus, as $\Phi_i(N,(P^s,P_{-i})) \in \{s,s_0\}$, it follows that $\Phi_i(N,(P^s,P_{-i}))=s_0$.
\end{proof}

$\,$

\noindent {\bf{Proof of Theorem 2.}} Given a priority profile $\succ$, the stability of ${\rm{DA}}^{\succ}$ ensures that it is individually rational and weakly non-wasteful. ${\rm{DA}}^{\succ}$ is also population-monotonic (Crawford, 1991), strategy-proof (Dubins and Freedman, 1981; Roth, 1982), and satisfies S-WrARP because under deferred acceptance each school chooses the students according to its priority order. The weak local non-bossiness of ${\rm{DA}}^{\succ}$ follows from \autoref{loc_nonb}. 

Suppose that $\Phi:{\mathcal{N}}\times \mathcal{P}\rightarrow {\mathcal{M}}^*$ satisfies individual rationality (IR), weak non-wastefulness (WNW), population-monotonicity (PM), strategy-proofness (SP), S-WrARP, and weak local non-bossiness (WLNB). It follows from \autoref{NUEVO-1} that, for every $s\in S$, the mapping that associates to each $N\in {\mathcal{N}}$ the set $\Phi_s(N,(P^s,\cdots, P^s))$ is a $q_s$-responsive choice function. Hence, there exists a priority order $\succ_s$ defined on $\overline{N}$ such that:
\begin{itemize}
\item For every $N\in {\mathcal{N}}$ such that $\vert N\vert > q_s$, $\Phi_s(N,(P^s,\cdots, P^s))$ is the set of $q_s$ students in $N$ with higher priority under $\succ_s$.
\item For every $N\in {\mathcal{N}}$ such that $\vert N\vert \leq q_s$, $\Phi_s(N,(P^s,\cdots, P^s))=N$.
\end{itemize}

In particular, the following property holds:
\begin{quote}
\quad {\bf{(A)}} For every $s\in S$ and $N\in {\mathcal{N}}$, $$\quad\quad\quad \quad i\in N\setminus\Phi_s(N,(P^s,\cdots, P^s))\,\,\,\quad\Longrightarrow\,\,\,\quad  j\succ_s i,\,\,\,\forall  j\in \Phi_s(N,(P^s,\cdots, P^s)).$$
\end{quote}

Let $\succ=(\succ_s)_{s\in S}$. Notice that ${\rm{DA}}^\succ$ satisfies a property analogous to (A):
\begin{quote}
\quad {\bf{(B)}} For every $s\in S$ and $N\in {\mathcal{N}}$, $$\quad\quad\quad \quad i\in N\setminus {\rm{DA}}^\succ_s(N,(P^s,\cdots, P^s))\,\,\,\quad\Longrightarrow\,\,\,\quad  j\succ_s i,\,\,\,\forall  j\in {\rm{DA}}^\succ_s(N,(P^s,\cdots, P^s)).$$
\end{quote}

We will use properties (A) and (B) to show that $\Phi={\rm{DA}}^{\succ}$. 

By contradiction, assume that there is $N^*\in {\mathcal{N}}$ such that $\Phi(N^*,\cdot)\neq {\rm{DA}}^{\succ}(N^*,\cdot)$. \medskip

For each $P\in {\mathcal{P}}$, let $Z(P)=\vert \{i\in N^*: \vert A(P_i)\vert \leq 1 \}\vert $ be the number of students in $N^*$ who find at most one school admissible. Let $P^*\in {\mathcal{P}}$ be such that $\Phi(N^*,P^*)\neq {\rm{DA}}^{\succ}(N^*,P^*)$ and $Z( P^*)\geq Z(P)$ for all $P\in {\mathcal{P}}$ satisfying  $\Phi(N^*,P)\neq {\rm{DA}}^{\succ}(N^*,P)$. {Moreover, as $P^*$ is not necessarily uniquely defined, for each $i\in N^*$ choose $P^*_i$ from the set $\{P^s\}_{s\in S}$ when it is possible.\\

\noindent {\bf{Claim I.}} {\it{For each $j\in N^*$ such that $\Phi_j(N^*,P^*)\neq {\rm{DA}}^{\succ}_j(N^*,P^*)$ there is $s\in S$ such that $P^*_j=P^s$.}}\medskip

\noindent {\it{Proof.}} There are two cases to analyze:
\begin{itemize}
\item Suppose that $\Phi_j(N^*,P^*)\,P^*_j \,{\rm{DA}}^{\succ}_j(N^*,P^*)$. Since ${\rm{DA}}^{\succ}$ satisfies IR, $s\equiv \Phi_j(N^*,P^*)\in S$. Hence, $\vert A(P^*_j)\vert \geq  1$. Moreover, it follows from \autoref{NUEVO-2} that $\Phi_j(N^*,(P^s,P^*_{-j}))=s$ and ${\rm{DA}}^{\succ}_j(N^*,(P^s,P^*_{-j}))=s_0$. 
Since $\Phi(N^*,(P^s,P^*_{-j}))$ and ${\rm{DA}}^{\succ}(N^*,(P^s,P^*_{-j}))$ are different, if $\vert A(P^*_j)\vert >1$ then $Z(P^s, P^*_{-j})>Z(P^*)$ and we contradict the definition of $P^*$. Hence, $\vert A(P^*_j)\vert =  1$.
\item Suppose that ${\rm{DA}}^{\succ}_j(N^*,P^*)\,P^*_j \,\Phi_j(N^*,P^*) $. Then, following the arguments applied in the previous case but swapping the roles of $\Phi$ and ${\rm{DA}}^{\succ}$, we can conclude $\vert A(P^*_j)\vert =  1$.
\end{itemize}

In any of the cases above, there exists a school $s$ such that $A(P^*_j)=\{s\}$. Therefore, it follows from the definition of $P^*$ that $P^*_j=P^s$. This completes the proof of this claim.\\

Let $i\in N^*$ be such that $\Phi_i(N^*,P^*)\neq {\rm{DA}}^{\succ}_i(N^*,P^*)$. It follows from Claim I that there exists $s\in S$ such that $P^*_i=P^s$. Moreover, IR guarantees that $$s={\rm{DA}}^{\succ}_i(N^*,P^*)\,P^*_i \,\Phi_i(N^*,P^*)=s_0,\quad\quad\mbox{or}\quad\quad s=\Phi_i(N^*,P^*)\,P^*_i \, {\rm{DA}}^{\succ}_i(N^*,P^*)=s_0.$$

There are two scenarios to analyze:
\begin{itemize}
\item When $s={\rm{DA}}^{\succ}_i(N^*,P^*)\,P^*_i \,\Phi_i(N^*,P^*)=s_0$, WNW implies that $\vert \Phi_s(N^*,P^*)\vert=q_s$. Let $\Phi_s(N^*,P^*)\equiv\{i_1,\ldots,i_{q_s} \}$, where $i_1\succ_s \cdots\succ_s i_{q_s}$. Since $\Phi_{i_1}(N^*,P^*)=s$, it follows from \autoref{NUEVO-2} that $\Phi_{i_1}(N^*,(P^s,P^*_{-i_1}))=s$. As a consequence, WLNB ensures that
$$\Phi_s(N^*,(P^s,P^*_{-i_1}))=\Phi_s(N^*,P^*).$$

Applying the previous argument sequentially to $i_2,\ldots, i_{q_s}$, we conclude that
$$
\Phi_s(N^*,P')=\Phi_s(N^*,P^*),
$$
where the preference profile $P'$ is characterized by $P'_j=P^s$ for all $j\in \Phi_s(N^*,P^*)$ and $P'_j=P^*_j$ for all $j\in N^*\setminus \Phi_s(N^*,P^*)$.\footnote{This is the only step in the proof of \autoref{coro-bando} where WLNB is required (see \autoref{rem-thm2}).} Since $P^*_i=P^s$, if $N'\equiv \Phi_s(N^*, P^*)\cup\{i\}$, it follows that $P'_{N'}=(P^s,\ldots, P^s)$ and PM  implies that
$$
\Phi_s(N',P')=\{i_1,\ldots,i_{q_s} \}.
$$

Therefore, $\Phi_i(N',P')\neq s$ and property (A) imply that $i_{q_s}\succ_s i$.\footnote{Notice that, $\Phi_s(N',P')=\Phi_s(N', (P^s,\ldots, P^s))$. Indeed, the definition of a mechanism ensures that, for every $N\subseteq \overline{N}$ and $P\in {\mathcal{P}}$, the matching  $\Phi(N,P)$ depends only on $[N,S,q,(P^t_i)_{i\in N}]$, where $P^t_i$ is the truncation of $P_i$ at $s_0$.}

On the other hand, Claim I implies that every $j\in \Phi_s(N^*,P^*)$ is either assigned to $s$ under ${\rm{DA}}^\succ(N^*,P^*)$ or has $P^*_j=P^s$. Moreover, as $i\in {\rm{DA}}^{\succ}_s(N^*,P^*)$, there exists a student $k\in \Phi_s(N^*,P^*)$ such that $k\notin {\rm{DA}}^{\succ}_{s}(N^*,P^*)$. Hence, $P^*_k=P^s$ and    ${\rm{DA}}^{\succ}_{k}(N^*,P^*)=s_0$, which thanks to WNW ensures that $\vert{\rm{DA}}^\succ_s(N^*,P^*)\vert=q_s$. Denote $ {\rm{DA}}^\succ_s(N^*,P^*)\equiv\{i, i'_2, \ldots,i'_{q_s} \}$. We can apply a sequential argument analogous to the one used above, but now focused on students in ${\rm{DA}}^\succ_s(N^*,P^*)$ instead of $\Phi_s(N^*,P^*)$, in order to guarantee that, if $\hat{N}\equiv {\rm{DA}}^\succ_s(N^*,P^*)\cup\{k\}$, then
$${\rm{DA}}^\succ_s(\hat{N}, \hat{P})=\{i, i'_{2},\ldots, i'_{q_s}\},$$ 
where $\hat{P}$ is characterized by $\hat{P}_j=P^s$ for all $j\in {\rm{DA}}^\succ_s(N^*,P^*)$ and $\hat{P}_j=P^*_j$ for all $j\in N^*\setminus {\rm{DA}}^\succ_s(N^*,P^*)$. Hence, ${\rm{DA}}^\succ_k(\hat{N},\hat{P})\neq s$, and given that $\hat{P}_k= P^*_k=P^s$, it follows from property (B) that $i\succ_s k$. Since $k\in \Phi_s(N^*,P^*)$, we conclude that $i\succ_s i_{q_s}$. A contradiction.

\item When $s=\Phi_i(N^*,P^*)\,P^*_i \, {\rm{DA}}^{\succ}_i(N^*,P^*)=s_0,$ we can obtain a contradiction applying the arguments of the previous scenario but swapping the roles of $\Phi$ and ${\rm{DA}}^{\succ}$.
\end{itemize} 

It follows that the mechanisms 
$\Phi$ and ${\rm{DA}}^{\succ}$ coincide.\hfill $\Box$\\

\noindent 
\begin{remark} \label{rem-thm2}
{\textnormal{The sequential arguments applied in the proof of \autoref{coro-bando}, those where the preferences of a group of students are altered one by one without modifying the assignment, are crucial to adapt Ehlers and Klaus (2016, Theorem 1) to many-to-one matching problems. These arguments use WLNB and are applied as part of a strategy to obtain a contradiction after assuming that $\Phi_s(N^*,P^*)\neq {\rm{DA}}^{\succ}_s(N^*,P^*)$ for some $s\in S$ and $(N^*,P^*)\in {\mathcal{N}}\times {\mathcal{P}}$. However, as Claim I ensures that $P^*_j=P^s$ for all $j\in \Phi_s(N^*,P^*)\setminus {\rm{DA}}^{\succ}_s(N^*,P^*)$, 
the same techniques of Ehlers and Klaus (2016) could be applied to prove \autoref{coro-bando} if we knew that $ \Phi_s(N^*,P^*)\cap {\rm{DA}}^{\succ}_s(N^*,P^*)$ is an empty set. As expected, the latter always occurs in one-to-one matching markets: when $q_s=1$, $\Phi_s(N^*,P^*)\neq {\rm{DA}}^{\succ}_s(N^*,P^*)$ implies that $\Phi_s(N^*,P^*)\cap {\rm{DA}}^{\succ}_s(N^*,P^*)=\emptyset$. In summary, the axiom WLNB is required only because $ \Phi_s(N^*,P^*)\cap {\rm{DA}}^{\succ}_s(N^*,P^*)$ can be a non-empty set, and it  is something we cannot rule out in many-to-one matching markets simply by assuming that $\Phi_s(N^*,P^*)\neq {\rm{DA}}^{\succ}_s(N^*,P^*)$.}}\hfill $\Box$\\
\end{remark}

\noindent {\bf{Independence of the axioms in Theorem 2.}} The independence of individual rationality, weak non-wastefulness, population monotonicity, and strategy-proofness is shown in Ehlers and Klaus (2016) through several examples in one-to-one matching markets. Since these examples satisfy S-WrARP and weak local non-bossiness, they also show that there exist mechanisms that satisfy all the axioms of \autoref{coro-bando} except for one of the following: individual rationality, weak non-wastefulness, population monotonicity, or strategy-proofness. Moreover, our \autoref{EK-1} provides a scenario where all the axioms of \autoref{coro-bando} but weak local non-bossiness are satisfied. Hence, to ensure the independence of the axioms in \autoref{coro-bando}, it suffices to show that there are mechanisms that satisfy all the axioms but S-WrARP. \\

\begin{example} {\it{All the axioms are satisfied but S-WrARP.}}

Let $\bar N=\{1,2,3,4\}$, $S=\{s\}$ and $q_s=2$. Consider the profile $\tilde P$ such that $s\tilde P_i s_0$ for all $i\in \overline{N}$ and the mechanism $\Phi$ such that:
\begin{itemize}
\item $
    \Phi_s(\{1,2,3\}, \tilde P)=\{1,3\}.$
\item For $(N,P)\neq (\{1,2,3\}, \tilde{P})$, the assignment $\Phi_s(N,P)$ is computed using the serial dictatorship mechanism according to the order $(4,3,2,1)$, denoted by ${\rm{SD}}_{(4,3,2,1)}$.\footnote{The serial dictatorship mechanism assigns schools' seats following a given order of students: the first one chooses her favorite school, the next one chooses her favorite alternative among the schools that still have places unassigned, and so on.} 
\end{itemize}

Since ${\rm{SD}}_{(4,3,2,1)}$ coincides with ${\rm{DA}}^\succ$ when $\succ_s: 4,3,2,1$, it follows that it is individually rational, weakly non-wasteful, population-monotonic, strategy-proof, weakly locally non-bossy, and satisfies S-WrARP (\autoref{coro-bando}). Hence, it is straightforward to verify that $\Phi$ is individually rational and weakly non-wasteful. Furthermore, strategy-proofness and weak local non-bossiness trivially hold, because $P_i:s,s_0$ and $P'_i:s_0,s$ are the only preference relations that a student $i$ may have. 

To verify population monotonicity, we first compare the assignments $\Phi(\bar N, \tilde P)$ and $\Phi(\{1,2,3\}, \tilde P)$. In $\Phi(\bar N, \tilde P)$, students $3$ and $4$ are assigned to $s$, while in $\Phi(\{1,2,3\}, \tilde P)$ students $1$ and $3$ are  assigned to $s$. Since $4\notin \{1,2,3\}$, population-monotonicity holds when we compare $\overline{N}$ with $\{1,2,3\}$. On the other hand, if we 
begin with the subset \(\{1,2,3\}\) and consider a subpopulation $N\subset \{1,2,3\}$,
$\Phi$ assigns each student in $N$ to $s$. Therefore, none of them worsen their situation in relation with $\Phi(\{1,2,3\},\tilde{P})$. Hence, population-monotonicity holds.

Finally, by taking $N=\{1,2,3\}$, $N'=\{1,2,4\}$, $i=1$, and $j=2$, one can easily verify that \(\Phi\) does not satisfy S-WrARP at preference profile $\tilde{P}$. 
\hfill $\Box$\\
\end{example}
}

\section{omitted proofs and further remarks}\label{app} \medskip

Given a student $1\in N$ and preference profiles $P, P'\in {\mathcal{P}}$, let $\mu={\rm{DA}}(P_1,P_{-1})$ and $\mu'={\rm{DA}}(P'_1,P_{-1})$. The following result formalizes the claim made in \autoref{r1}. \\

\begin{lemma}\label{DA3}
If $\mu(1)\neq \mu'(1)$, then ${\rm{Coll}}_1(P) \cap {\rm{Coll}}_1(P'_1, P_{-1})=\emptyset$. \medskip
\end{lemma}

\begin{proof}

Let $\overline{s}\equiv \mu(1)\in S\cup\{s_0\}$ and $\hat{s}\equiv \mu'(1)\in S\cup\{s_0\}$. By contradiction, suppose that there exists $i_1 \in {\rm{Coll}}_1(P) \cap {\rm{Coll}}_1(P'_1, P_{-1})$. Hence, $i_1\neq 1$, $\mu(i_1)= \overline{s}$, and $\mu'(i_1)= \hat{s}$. Since $\overline{s}\neq \hat{s}$, we can assume that $\hat{s} P_{i_1}\overline{s}$ (in other case, exchange the roles of $P$ and $(P'_1, P_{-1})$). 

Let $I$ be the set of students who prefer $\mu'$ to $\mu$ under $P$. Note that $i_1\in I$. Consider the graph $G^*$  with nodes $V^*=\{i\in N: \,\, \mu(i)\in \mu(I)\}$ and directed edges $E^*=\{[i,j]: i\in I,\, \mu(j)=\mu'(i)\}.$ Hence, the nodes of $G^*$ are the students who are assigned in $\mu$ to schools that have at least one student of $I$ in the matching $\mu$, and there is an edge between every $i\in I$ and all the students who are assigned to $\mu'(i)$ in $\mu$. Since DA is  stable, each student in $I$ is assigned in $\mu'$ to a school that fills its quota at $\mu$. This implies that $G^*$ has a non-empty set of edges.

Applying to $G^*$ the same edge replacement process applied to the graph $G$ in the proof of \autoref{loc_nonb}---but considering $(V^*, E^*)$ instead of $(V,E)$ and using the fact that, when moving from $\mu$ to $\mu'$, every $i\in I$ displaces someone who was assigned to $\mu'(i)$ at $\mu$---we can construct a graph $G^{**}=(V^*, E^{**})$ with contains a $\mu$-improving cycle $(i^*_1,\ldots, i^*_r)$ that cannot be blocked by a student in $V^*$. 

We claim that $(i^*_1,\ldots, i^*_r)$ cannot be blocked by students in $N\setminus V^*$. Indeed, as $1\in V^*$, any $h\in N\setminus V^*$ has the same preferences in $P$ and $(P'_1, P_{-1})$ and consider $\mu(h)$ at least as preferred as $\mu'(h)$. Hence, if $h$ blocks $(i^*_1,\ldots, i^*_r)$, she will also block $\mu'$ (the arguments in Footnote 13 still hold when $\mu(h) P_h \mu'(h)$ instead of $\mu(h)=\mu'(h)$, and $E\setminus E'$ is replaced by $E^*\setminus E^{**}$).   

Therefore, $(i^*_1,\ldots, i^*_r)$ cannot be blocked by any student. A contradiction with the fact that $\mu$ is the student optimal stable matching under $P$.
\end{proof}

\medskip

The following example shows that the school-median stable mechanism introduced by Klaus and Klijn (2006) does not satisfy local non-bossiness.\\ 

\begin{example}\label{median}{\it{The school-median stable mechanism is locally bossy.}}

Given a school choice problem $[N,S,\succ, q,P]$ with $k$ stable matchings, we refer to $\mu$ as the {\bf{school-median stable matching}} when the following properties hold:
\begin{itemize}
\item If $k$ is odd, each student is assigned to her $(k+1)/2$-th (weakly) best match among all $k$ stable matchings.
\item If $k$ is even, each student is assigned to her $(k+2)/2$-th (weakly) best match among all $k$ stable matchings.
\end{itemize}

Klaus and Klijn (2006, Theorem 3.2) ensure that this matching always exists.

We claim that the mechanism that associates to each preference profile the school-median stable matching is locally bossy. Let $N=\{1,2,3,4\}$, $S=\{s_1,s_2\}$, $q_{s_1}=q_{s_2}=2$, $\succ_{s_1}: 2,1,3,4$, and $\succ_{s_2}: 3,4,2,1.$ If preferences are given by $
 P_1: s_2,s_1,s_0$, $P_2:s_2,s_1,s_0$, $P_3:s_1,s_2,s_0$, and $P_4: s_1,s_2,s_0$, then
there are three stable matchings:
\begin{eqnarray*}
\mu&=&((1,s_2), (2,s_2), (3,s_1), (4,s_1)),\\
\eta&=&((1,s_1), (2,s_2), (3,s_1), (4,s_2)),\\
\rho&=&((1,s_1), (2,s_1), (3,s_2), (4,s_2)).
\end{eqnarray*}

It follows that $\eta$ is the school-median stable matching. Now, when student $1$'s preferences change to $P'_1: s_1,s_2,s_0$, the only stable matchings are $\eta$ and $\rho$. In this case, $\rho$ is the school-median stable matching.
Therefore, as $\eta(1)=\rho(1)=s_1$ and $\eta(s_1) \neq \rho(s_1)$, we conclude that the school-median stable mechanism is locally bossy.\hfill $\Box$\\
\end{example}

The next example shows that the construction of  the graph $G'$ in the proof of \autoref{loc_nonb} is not superfluous. Indeed, the $\mu$-improving cycle $(i^*_1,\ldots, i^*_r)$, that allows us to show that $\mu(\overline{s})=\mu'(\overline{s}),$ may not be present in the graph $G$.\\

\begin{example}\label{ex-cycles} Let $N=\{1,2,3,4,5,6\}$, $S=\{s_1,s_2,s_3,s_4,s_5\}$, $q_{s_1}=2$, $q_s=1$ for every $s\neq s_1$. and 
$$\quad \succ_{s_1}: 6,1,\ldots,\quad\quad 
\succ_{s_2}: 5,1,3,2\ldots,\quad\quad
\succ_{s_3}: 4,2,3\ldots\quad\quad
\succ_{s_4}: 3,4\ldots\quad\quad
\succ_{s_5}: 2,5\ldots$$

Consider a preference profile $P\in {\mathcal{P}}$ such that
\begin{eqnarray*}
\begin{array}{ll}
P_1:& s_2,s_1,\ldots,  \\
P_2:& s_2,s_3,s_5,\ldots, 
\end{array}
& \quad\quad \quad 
\begin{array}{ll}
P_3:& s_3,s_2,s_4,\ldots,\\
P_4:& s_4, s_3,\ldots, 
\end{array} & \quad\quad\quad 
\begin{array}{ll}
P_5:& s_5,s_2,\ldots, \\
P_6:&  s_1,\ldots,
\end{array}
\end{eqnarray*}
and let $P'_1:s_1,\ldots$. Then 
\begin{eqnarray*}
\mu&\equiv& {\rm{DA}}(P)=((1,s_1), (2,s_5), (3,s_4), (4,s_3), (5,s_2), (6,s_1)),\\
\mu'&\equiv& {\rm{DA}}(P'_1, P_{-1})=((1,s_1), (2,s_2), (3,s_3), (4,s_4), (5,s_5), (6,s_1)).
\end{eqnarray*}

Using the notation from the proof of \autoref{loc_nonb}, it follows that $G=(V,E)$ is defined by $V=\{2,3,4,5\}$ and $E=\{[2,5], [3,4], [4,3], [5,2]\}$. Moreover, the graph $G'=(V,E')$ satisfies $E'=\{[3,5], [2,4], [4,3], [5,2]\}.$ Hence, $G'$ has a unique $\mu$-improving cycle, namely: $(2,4,3,5)$. As expected, student $1$ is the only student who $\mu$-blocks this cycle (at edge $[3,5]$). On the other hand, $G$ has two $\mu$-improving cycles: $(2,5)$ and $(3,4)$. The first one is $\mu$-blocked by the students $1$ and $3$ at edge $[2,5]$, while the second one is $\mu$-blocked by the student $2$ at edge $[3,4]$. More importantly, $\mu'$ does not implement $(2,4,3,5)$. \hfill $\Box$\\
\end{example}

In \autoref{ex-cycles}, student $1$ has a relevant role in the inefficiency of ${\rm{DA}}(P)$. Indeed, although $1$ is assigned to $s_1$, she displaces $2$ from $s_2$. However, the displacement of $2$ induces other inefficiencies: although $2$ is assigned to $s_5$, she displaces $3$ from $s_3$. Hence, when student $1$ reports $P'_1$, the improvement in welfare with respect to ${\rm{DA}}(P)$ is greater than the one generated when only $(2,4,3,5)$ is implemented. Indeed, the matching obtained from ${\rm{DA}}(P)$ when $(2,4,3,5)$ in implemented, $\eta=((1,s_1), (2,s_3), (3,s_2), (4,s_4), (5,s_5), (6,s_1))$, corrects the inefficiencies generated {\it{after}} the displacement of $3$ from $s_3$. Therefore, in the context of the example above, ${\rm{DA}}(P'_1, P_{-1})$ Pareto dominates the matching obtained when a $\mu$-improving cycle present in $G'$ but not in $G$ is implemented from ${\rm{DA}}(P)$. 

This last property always holds. More formally, let $(i^*_1,\ldots, i^*_r)$ be a $\mu$-improving cycle present in $G'$ (see the proof of Theorem 1, Case I). We affirm that $\mu'$ weakly Pareto dominates under $P$ the matching $\eta$ obtained by implementing $(i^*_1, \ldots, i^*_r)$ from $\mu$. By contradiction, assume that $\eta(k) P_k \mu'(k)$ for some $k\in N$. Since $\mu'$ Pareto dominates $\mu$ under $P$, it follows that $k=i^*_{s}$ for some $s\in \{1,\ldots, r\}$,  $k$ $\mu$-blocks an edge $[i,i^*_{s+1}]$ of $G$, and $\eta(k)=\mu(i^*_{s+1})=\mu'(i)$ [modulo $r$]. Hence, if $s=\mu'(i)$, then $k\succ_s i$ and $s P_k \mu'(k)$. This contradicts the stability of $\mu'$.\\

Kesten (2010) introduces the {\bf{efficiency adjusted deferred acceptance mechanism}} (${\rm{EADAM}}$). This mechanism implements improvement cycles sequentially in such a way that ${\rm{DA}}$ is recalculated after modifying the preferences of some students who agree to waive their priorities at schools. In particular, when everyone agrees to waive their priorities at schools, ${\rm{EADAM}}$ is efficient and Pareto dominates ${\rm{DA}}$. 

Our results are not directly related to Kesten (2010). For instance, the improvement cycle constructed in the proof of \autoref{loc_nonb}---the one that is blocked only by the student who misreports her preferences---does not necessarily coincide with one of the improvement cycles implemented by ${\rm{EADAM}}$. Indeed, it is not difficult to verify that in \autoref{ex-cycles} we have that $${\rm{EADAM}}(P)=((1,s_1), (2,s_2), (3,s_3), (4,s_4), (5,s_5), (6,s_1)).$$  Hence, ${\rm{EADAM}}(P)$ does not implement the $\mu$-improving cycle $(2,4,3,5)$. \medskip

\end{document}